\def\m{\mathcal}
\def\Bern{\mathsf{Bern}}
\DeclareMathOperator{\E}{\mathbb{E}}
\DeclareMathOperator{\Pe}{\mathsf{P_e}}
\DeclareMathOperator{\ind}{\mathds{1}}
\DeclareMathOperator{\isit}{\mathsf{RUNS}}
\newcommand{\eps}{\epsilon}
\title[Deterministic Finite-Memory Bias Estimation]{Deterministic Finite-Memory Bias Estimation} 
\begin{document}

\maketitle

\begin{abstract}%
 In this paper we consider the problem of estimating a Bernoulli parameter using finite memory. Let $X_1,X_2,\ldots$ be a sequence of independent identically distributed Bernoulli random variables with expectation $\theta$, where $\theta \in [0,1]$. Consider a finite-memory deterministic machine with $S$ states, that updates its state $M_n \in \{1,2,\ldots,S\}$ at each time according to the rule $M_n = f(M_{n-1},X_n)$, where $f$ is a deterministic time-invariant function. Assume that the machine outputs an estimate at each time point according to some fixed mapping from the state space to the unit interval. The quality of the estimation procedure is measured by the asymptotic risk, which is the long-term average of the instantaneous quadratic risk. The main contribution of this paper is an upper bound on the smallest worst-case asymptotic risk any such machine can attain. This bound coincides with a lower bound derived by Leighton and Rivest, to imply that $\Theta(1/S)$ is the minimax asymptotic risk for deterministic $S$-state machines. In particular, our result disproves a longstanding $\Theta(\log S/S)$ conjecture for this quantity, also posed by Leighton and Rivest.
\end{abstract}

\begin{keywords}%
Learning with Memory Constraints, Parametric Estimation, Minimax Estimation.
\end{keywords}

\section{Introduction}

The statistical hardness of a parametric estimation problem has been traditionally characterized by the number of independent samples from the distribution $P_{\theta}$ one needs to see in order to accurately estimate $\theta$. However, as the amount of available data is constantly increasing, collecting enough samples for accurate estimation is becoming less of a problem, provided that the parameter $\theta$ is of a relatively low dimension. In this regime, it is the computational resources dedicated to the estimation task, rather than the number of samples, that constitute the main bottleneck determining the quality of estimation one can attain.

As a result, the topic of estimation / learning under computational constraints is currently drawing considerable attention; in particular, the problem of estimation / learning under memory constraints has been recently studied in various different setups, as we further elaborate in Subsection~\ref{subsec:related_work}. Despite this ongoing effort, there are still substantial gaps in the understanding of the effects memory limitations can have on the minimal possible estimation error. This work addresses such a gap in arguably the simplest setup possible: estimation of a single parameter $\theta\in[0,1]$ from an infinite number of independent samples from $P_\theta$, using a finite-memory learning algorithm.



Specifically, we consider the bias estimation problem, defined as follows: $X_1,X_2,\ldots$ is a sequence of independent identically distributed random variables drawn according to the $\Bern(\theta)$ distribution. An \textit{$S$-state estimation procedure} for this problem  consists of two functions: $f$, and $\hat{\theta}$, where $f:[S] \times \{0,1\}\rightarrow [S]$ is a deterministic state transition (or memory update) function (here $[S]=\{1,\ldots,S\}$), and
$\hat\theta:[S]\rightarrow [0,1]$ is the estimate function. Letting $M_n$ denote the state of the memory at time $n$, this finite-state machine evolves according to the rule 
\begin{align}
M_0&=s_{\text{init}},\label{eq:init} \\M_n&=f(M_{n-1},X_n)\in [S],\label{eq:evolution}
\end{align}
for some predetermined initial state $s_{\text{init}} \in [S]$. If the machine is stopped at time $n$, it outputs the estimation $\hat\theta(M_n)$. We define the (asymptotic) quadratic risk attained by this estimation procedure, given that the true value of the parameter is $\theta$, to be\footnote{It is not difficult to show that the limit exists due to the underlying finite-state structure and the independence of the samples.}
\begin{align}
R_\theta(f,\hat\theta) = \lim_{n\rightarrow\infty} \frac{1}{n}\sum_{i=1}^n \E \left(\hat{\theta}(M_i)-\theta\right)^2.\label{eq:MSE} 
\end{align}
We are interested in the \emph{minimax risk} of the estimation problem, defined as 
\begin{align}
R(S) \triangleq \min_{ f,\hat\theta}\max_{\theta\in[0,1]} R_\theta (f,\hat\theta), 
\end{align} 
where the minimum is taken over all $S$-state estimation procedures. This paper derives an upper bound on the minimax risk, which together with a known lower bound, establishes the exact behavior of the minimax risk with $S$. 
%

%

Note that here the memory update function $f$ is not allowed to depend on time. First, as our focus here is on upper bounds, it is always desirable to use the weakest possible model. Moreover, the restriction to time-invariant algorithms is operationally appealing, since storing the time index necessarily incurs a memory cost. Furthermore, since the number of samples is unbounded, just storing the code generating a time-varying algorithm may require unbounded memory. 

Besides memory, another resource that plays an important role here is randomness. While allowing the use of randomization in the estimation procedure may certainly help, this resource has a cost. Even if one has access to unlimited randomness (which is the case in our setting, since randomness can be easily extracted from the i.i.d. sequence $X_1,X_2,\ldots$), storing this randomness places a toll on one's memory budget, which needs to be taken into account in our deterministic setup. One can nevertheless define the \textit{randomized minimax risk} of the estimation problem, to be the smallest asymptotic risk that can be uniformly guaranteed when randomized state-transition functions $f$ are allowed, i.e., 
\begin{align}
R_{\mathsf{rand}}(S) \triangleq \min_{ \text{randomized } f,\hat\theta}\,\max_{\theta\in[0,1]} R_\theta (f,\hat\theta), 
\end{align} 
We emphasize that in the above, in contrast to the deterministic setup we consider in this paper, randomness is ``free'' and not counted toward the memory budget. Our main result is that, in contrast to what was conjectured by~\cite{leighton1986estimating}, $R(S)$ and $R_{\mathsf{rand}}(S)$ are equal up to constants independent of $S$.

Let us be more precise. Prior to this work, it was known that  $R_{\mathsf{rand}}(S)=\Theta(1/S)$. The upper bound was proved by~\cite{samaniego1973estimating}, who constructed an $S$-state randomized estimation procedure that asymptotically induces a $\mathrm{Binomial}(S-1,\theta)$ stationary distribution on the memory state space. The lower bound was established a decade later by~\cite{leighton1986estimating}, using the Markov chain tree theorem. In the same paper, ~\cite{leighton1986estimating} further constructed a deterministic $S$-state estimation procedure by de-randomizing Samaniego's construction, and as a result showed that $R(S) = O(\log{S}/S)$. 
%
They then conjectured that this is the best possible asymptotic minimax risk any deterministic $S$-state estimation procedure can attain, and further stated the problem of proving or disproving this conjecture as the first out of five open problems left for future research. A nice interpretation of their conjecture is the naturally appealing claim that an optimal deterministic algorithm can be obtained by de-randomizing the optimal random algorithm. In their deterministic algorithm, which they believed to be optimal, randomness is extracted from the measurements by augmenting each state with $O(\log(S))$ additional states, which increases the overall MSE (see Section III of~\cite{leighton1986estimating}). Surprisingly, we show that such a de-randomization is suboptimal,
thereby disproving the conjecture of Leighton and Rivest.




\begin{theorem}\label{thm:good_det}
	\begin{align}
R(S) = O\left(\frac{1}{S}\right).
	\end{align}
\end{theorem}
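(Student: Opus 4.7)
The plan is to exhibit an explicit deterministic $S$-state machine whose stationary distribution, for every $\theta\in[0,1]$, is concentrated on states whose associated estimate lies within $O(1/\sqrt S)$ of $\theta$, which immediately yields asymptotic squared error $O(1/S)$. The main conceptual point is to bypass the $\Theta(\log S)$ overhead incurred by the Leighton--Rivest derandomization, which ``manufactures'' approximately unbiased coin flips inside the machine; instead, the incoming samples should be used directly to shape the macroscopic dynamics on the state space.

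The construction I have in mind is hierarchical. A coarse layer of $K=\Theta(\sqrt S)$ \emph{anchor} states, labeled by evenly spaced targets $\theta_1<\theta_2<\cdots<\theta_K$ in $[0,1]$, carries the estimate: every state inside the gadget attached to anchor $k$ outputs $\hat\theta=\theta_k$. Attached to each anchor is a \emph{gadget} of $\Theta(\sqrt S)$ states that implements a deterministic sequential three-way test of $\theta$ against $\theta_k$, with exits labelled ``move to anchor $k-1$'', ``stay at $k$'', and ``move to anchor $k+1$''. The gadget is designed so that (i) at $\theta=\theta_k$ the left and right exit probabilities coincide, and (ii) when $\theta-\theta_k\gtrsim 1/\sqrt S$ (respectively $\le -1/\sqrt S$), the right (resp.\ left) exit exceeds the other by an $\Omega(1)$ probability gap. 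A natural realization is to walk a discretized log-likelihood-ratio statistic of $\theta_k+c/\sqrt S$ against $\theta_k-c/\sqrt S$ between three absorbing levels; the relevant walk has range $\Theta(\sqrt S)$ and so fits inside the gadget's state budget.

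At the macroscopic scale, the chain induced on the $K$ anchors is a birth--death chain whose per-step drift at anchor $k$ is $\Omega(1)$ in the sign of $\theta-\theta_k$ whenever $|\theta-\theta_k|\gtrsim 1/\sqrt S$. A standard stationary analysis shows that the mass on anchor $k$ decays exponentially in $|k-k^\ast(\theta)|$, where $k^\ast(\theta)$ is the anchor closest to $\theta$, so the stationary expected squared error of the estimate is $O(1/K^2)=O(1/S)$ uniformly in $\theta$. Convergence of the Cesàro limit in \eqref{eq:MSE} to this stationary risk follows from ergodicity of the chain for $\theta\in(0,1)$; the boundary cases $\theta\in\{0,1\}$ are handled by reserving trapping states at the two extremes of the anchor layer.

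The main obstacle will be the gadget design. With only $\Theta(\sqrt S)$ states the test must be essentially symmetric at $\theta=\theta_k$ while simultaneously having $\Omega(1)$ sensitivity to deviations of size $1/\sqrt S$; since the gadget is fully deterministic and $\theta_k$ is in general not representable as a small-denominator rational, the walk's increments and thresholds need careful calibration, possibly after a light batching of samples that aligns the test statistic with the state grid. A secondary issue is certifying the macroscopic drift bound uniformly in $\theta$, especially when $\theta$ lies between two consecutive anchors and the chain genuinely oscillates between them; a direct bias--variance decomposition of the stationary estimate, combined with the anchor spacing $1/K$ and the exponential concentration above, should then close out the $O(1/S)$ bound.
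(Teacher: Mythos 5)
Your high-level architecture is the same as the paper's: $K=\Theta(\sqrt S)$ classes, each carrying a fixed estimate and realized by a $\Theta(\sqrt S)$-state test gadget with left/right exits, so that the induced chain on classes is a birth--death chain with $\Omega(1)$ drift toward the anchor nearest $\theta$, exponential concentration, and $O(1/K^2)=O(1/S)$ risk. The gaps are precisely at the two points you defer, and your proposed fixes do not go through as stated. First, the gadget. A discretized LLR walk for $\theta_k+c/\sqrt S$ versus $\theta_k-c/\sqrt S$ has increments of size roughly $\Delta/\theta_k$ (on a one) and $\Delta/(1-\theta_k)$ (on a zero), with $\Delta=\Theta(1/\sqrt S)$; to realize it as a deterministic finite-state machine these two steps must be commensurate on the state grid, and with $\theta_k=k/(K+2)$ the natural grid unit is about $\Delta(K+2)/\bigl(k(K+2-k)\bigr)$, so reaching absorbing barriers at constant LLR costs $\Theta\bigl(k(K+2-k)/(\Delta(K+2))\bigr)=\Theta(S)$ states per gadget for mid-range anchors --- a factor $\sqrt S$ over budget. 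Rounding the steps to a coarser grid shifts the balance point by more than the $1/\sqrt S$ sensitivity you need, and ``light batching'' does not rescue it: a batch informative about a $1/\sqrt S$ bias difference must have length $\Omega(S)$, and counting within it already consumes $\Omega(S)$ states. The paper's resolution is not an LLR walk at all but the run-based machine $\isit(N,p,q)$ of Berg et al.\ (exit right upon a run of $N-s$ ones, exit left upon a run of $s-1$ zeros), for which Lemma~\ref{lem:N_eps} shows that $O\bigl(K\log\frac{1}{\epsilon p(1-p)}\bigr)$ states suffice to separate composite hypotheses $1/K$ apart with constant error; this device (and its error analysis via Lemmas~\ref{lem:monotone} and~\ref{lem:close_hyp}) is the central ingredient your sketch is missing.

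Second, the quantity you must bound is the time-average risk of the original chain $\{M_n\}$, so class $k$ is weighted by $\mu_k\,\E[T_k]$ --- the stationary mass of the anchor chain \emph{times} the expected holding time inside the gadget (Lemma~\ref{lem:Qn_stat}) --- not by $\mu_k$ alone. Exponential decay of $\mu_k$ therefore does not close the argument: you must also show that gadgets far from $\theta$ do not compensate with disproportionately long holding times. The paper proves exactly this (Lemma~\ref{lem:wait_time}: $\E[T_j]\ge(1-\epsilon)\E[T_i]$ in the relevant direction), via the identity $\E[T_j]=\Pr(\text{exit right})\,\E[T_j^1]$ and monotonicity of the run lengths $s_i$ and $N_i-s_i$ in $i$; an analogous bound would have to be established for whatever gadget you settle on, and your sketch does not address it. A smaller point: making the extreme anchors literally trapping is unsound --- for any fixed $\theta\in(0,1)$ the trap is still reached with probability one eventually, leaving a constant asymptotic risk; the paper instead has the extreme classes bounce back deterministically ($p_1=1$, $p_K=0$) and treats $\theta$ near $0$ or $1$ by a separate, one-sided version of the main estimate in the appendix.
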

Since deterministic $S$-state estimation procedures are a subset of the class of $S$-state randomized estimation procedures, we clearly have $R (S)\geq R_{\mathsf{rand}}(S)=\Omega(1/S)$, where the lower bound is due to~\cite{leighton1986estimating}. Consequently:
\begin{corollary}
	\begin{align}
	R(S) = \Theta\left(\frac{1}{S}\right).   
	\end{align}
\end{corollary}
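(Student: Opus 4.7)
The proof of the corollary is immediate once one has Theorem~\ref{thm:good_det}: the paragraph just before the corollary already records $R(S)\ge R_{\mathsf{rand}}(S)=\Omega(1/S)$ via the Leighton-Rivest lower bound for the randomized problem, and Theorem~\ref{thm:good_det} supplies the matching upper bound $R(S)=O(1/S)$; the two together yield $R(S)=\Theta(1/S)$. All of the substantive work is in proving Theorem~\ref{thm:good_det}, for which I sketch a plan below.

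The plan is to exhibit, for every $S$, an explicit $S$-state deterministic procedure whose worst-case asymptotic risk is $O(1/S)$, avoiding the $\log S$ loss incurred when one derandomizes Samaniego's chain by extracting $\Theta(\log S)$ auxiliary random bits per transition. Rather than simulating a randomized update, I would build an intrinsically deterministic chain whose stationary output already has mean near $\theta$ and variance $O(1/S)$. Concretely, set $K=\Theta(\sqrt{S})$ and partition the state space into $K{+}1$ ``belief levels'' $j\in\{0,\dots,K\}$, each state in level $j$ outputting $\hat\theta\equiv j/K$. Adjacent levels are linked by a small ``comparator gadget'' that consumes a stretch of observations and then transitions the machine into level $j{+}1$ or $j{-}1$; the gadget is engineered so that the probability of an ``up'' transition exceeds $1/2$ by $\Omega(|\theta-j/K|)$ exactly when $\theta>j/K$ (and symmetrically for ``down''). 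The induced birth-death chain on $\{0,\dots,K\}$ then has a linear restoring drift toward $j^\ast=\theta K$, and a standard Foster--Lyapunov / potential argument applied to $(j-\theta K)^2$ shows that the stationary distribution $\pi$ satisfies $\E_\pi(j-\theta K)^2=O(1)$. Since the output is $j/K$ at every time step,
\begin{align*}
R_\theta(f,\hat\theta)=\E_\pi\!\left(\frac{j}{K}-\theta\right)^{\!2}=\frac{1}{K^2}\,\E_\pi(j-\theta K)^2=O\!\left(\frac{1}{K^2}\right)=O\!\left(\frac{1}{S}\right),
\end{align*}
uniformly in $\theta$, as desired.

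The main obstacle is designing the comparator gadget so that the total state count $(K{+}1)\cdot|\text{gadget}|$ still fits in $S$ while the $\Omega(|\theta-j/K|)$ bias survives even when $|\theta-j/K|$ is as small as $\Theta(1/\sqrt{S})$. A naive batch-and-compare gadget tracking a $(c_1,c_0)$ counter over $L$ samples requires $\Theta(L^2)$ states per level (since it must remember the batch position in addition to the count), which is too wasteful to balance against $K=\Theta(\sqrt S)$. I anticipate replacing it with an adaptive sequential test of SPRT type, whose state footprint scales with the log-likelihood boundary width rather than with the batch length; the hard part will be verifying the required bias property of this gadget across all $j$ and all $\theta$, together with correctly handling the reflecting boundaries $j=0$ and $j=K$ (where $\theta$ may be near $0$ or $1$). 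Once the gadget has the stated bias with the required state count, the birth-death chain analysis is routine and delivers $R(S)=O(1/S)$, completing both Theorem~\ref{thm:good_det} and the corollary.
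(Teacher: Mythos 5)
Your derivation of the corollary itself is exactly the paper's: deterministic procedures form a subclass of randomized ones, so $R(S)\ge R_{\mathsf{rand}}(S)=\Omega(1/S)$ by Leighton--Rivest, and Theorem~\ref{thm:good_det} gives the matching $O(1/S)$ upper bound. That part is correct and identical in approach.

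However, the accompanying plan for Theorem~\ref{thm:good_det}, where you yourself locate all the substantive work, has two genuine gaps relative to what is actually needed (and to what the paper does). First, the comparator gadget is left unresolved, and it is precisely the crux: you ask for an ``up'' bias of order $|\theta-j/K|$, which must survive down to $|\theta-j/K|=\Theta(1/\sqrt S)$ within an $O(\sqrt S)$-state gadget, and you only conjecture that an SPRT-like design achieves this. The paper sidesteps this requirement entirely: each level is a $\isit$ mini-chain (runs of ones vs.\ runs of zeros) solving the \emph{composite} test $\{\theta>\frac{k+1}{K+2}\}$ vs.\ $\{\theta<\frac{k}{K+2}\}$ with \emph{constant} error probability $\eps<1/2$ using $O(K\log(1/\eps))$ states (Lemma~\ref{lem:N_eps}); this yields a constant drift and hence geometric decay of the sampled-chain stationary distribution away from the correct level (Lemma~\ref{lem:trans_prob}), which is both easier to obtain and stronger than the $O(1)$ second moment your Foster--Lyapunov argument would give. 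Second, your risk identity $R_\theta=\E_{\pi}(j/K-\theta)^2$ with $\pi$ the stationary law of the birth--death level chain is not justified: the asymptotic risk is a \emph{time} average, so each level $j$ is weighted by $\mu_j\E[T_j]$, where $T_j$ is the random dwell time inside the $j$th gadget, and these dwell times vary across levels and with $\theta$ (e.g., they can be very large in the boundary levels when $\theta$ is near $0$ or $1$). The paper needs its Lemma~\ref{lem:Qn_stat} (a renewal-type formula for the quantized process) and Lemma~\ref{lem:wait_time} (showing the holding time at the correct level is, up to a factor $1-\eps$, at least as large as elsewhere) to control exactly this weighting; without an analogue of these steps your bound $O(1/K^2)$ does not follow from the level-chain concentration alone.
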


\subsection{Related work}\label{subsec:related_work}

The study of learning and estimation under memory constraints has been initiated in the late 1960s by Cover and Hellman (with a precursor by~\cite{robbins1956sequential}) and remained an active research area for a decade or so. It has then been largely abandoned, but recently it has been again enjoying much attention, due to the reasons described above, and many works have addressed different aspects of the learning under memory constraints problem over the last few years. See, e.g.,~\cite{sd15,svw16,raz18,ds18,dks19,ssv19} for a far from exhaustive list of recent works. 

Most of the old work on learning with finite memory has been focused on the hypothesis testing problem. For the problem of deciding whether an i.i.d. sequence was drawn from $\Bern(p)$ or $\Bern(q)$,~\cite{cover1969hypothesis} described a time-varying finite-state machine with only $S=4$ states, whose error probability approaches zero with the sequence length. As time-varying procedures suffer from the shortcomings described above,~\cite{hellman1970learning} addressed the same binary hypothesis testing problem within the class of \textit{time-invariant randomized} procedures. They have found an \emph{exact} characterization of the smallest attainable error probability for this problem. To demonstrate the important role randomization plays in approaching this value, the same authors show in~\cite{hellman1971memory} that for any memory size $S<\infty$ and $\delta>0$, there exists problems such that any $S$-state deterministic procedure has probability of error $\Pe\geq \frac{1}{2}-\delta$, while their randomized procedure from~\cite{hellman1970learning} has $\Pe\leq \delta$. Note that one can simulate a randomized procedure with a deterministic one by using some of the samples of $\{X_n\}$ for randomness extraction, e.g., using~\cite{von195113} extraction. However, the extracted random bits must be stored, which could result in a substantial increase in memory, see~\cite{chandrasekaran1970finite}. In a recent paper~\cite{berg2020binary} derived a lower bound on the error probability attained by any $S$-state deterministic procedure, showing that while the smallest attainable error probability decreases exponentially fast with $S$ in both the randomized and the deterministic setups, the base of the exponent can be arbitrarily larger in the deterministic case.

One of the earlier works on estimation with finite memory is due to~\cite{roberts1970estimation}, who tackled the problem of estimation under quadratic risk for a random variable with additive noise. ~\cite{hellman1974finite} studied the problem of estimating the mean $\theta$
of a Gaussian distribution and discovered a $S$-state estimation procedure that asymptotically achieves the same Bayesian  quadratic risk as the optimal $S$-level quantizer $Q(\theta)$ for $\theta$, where $Q:\mathbb{R}\to[S]$.
As already described above,~\cite{samaniego1973estimating} and~\cite{leighton1986estimating} have showed that $R_{\mathsf{rand}}(S)=\Theta(1/S)$.~\cite{meron2004finite,ingber2006prediction,dar2014finite} studied the subject of finite-memory universal prediction of sequences using randomized/deterministic machines. More recently,~\cite{jain2018effective} studied the shrinkage in memory between the hypothesis testing and the estimation problem, namely the interesting fact that a machine with $S$ states can distinguish between two coins with biases that differ by $1/S$, whereas the best additive accuracy it can achieve in estimating the bias is only $1/\sqrt{S}$. We further note that the problem of estimating statistics with bounded memory is attracting considerable attention in the machine learning literature lately, see, e.g.,~\cite{chien2010space,kontorovich2012statistical,mcgregor2012space,sd15,svw16,raz18,ds18,dks19,ssv19}. Another closely related active line of work is that of estimating statistics under limited communication, e.g.,~\cite{zdjw13,garg2014communication,bgmnw16,xr17,jordan2018communication,han2018geometric,how18,bho18,act18,hadar2019communication,hadar2019distributed,acharya2020inference}.

\section{Proof of Theorem~\ref{thm:good_det}}

We now proceed to prove Theorem~\ref{thm:good_det}. We will describe our deterministic $S$-state estimation procedure and show that it attains quadratic risk of $O(1/S)$ uniformly for all $\theta\in[0,1]$. In this section we provide the entire proof, but for clarity we rely on several technical claims whose proofs are relegated to the next section or to the Appendix.


Recall from~\eqref{eq:init} and~\eqref{eq:evolution} that any deterministic $S$-state estimation procedure corresponds to a finite-state machine with $S$ states, with at most two outgoing edges from each state, one for $X_i=0$ and one for $X_i=1$. Running this machine on an i.i.d. $\Bern{(\theta)}$ input sequence $X_1,X_2,\ldots$, generates a Markov chain $\{M_n\}_{n=1}^\infty$, where $M_n$ denotes the state of the machine at time $n$. We emphasize that the distribution of the process $\{M_n\}$ depends on $\theta$, which is the parameter we are trying to estimate. To lighten notation, we nevertheless leave this dependence implicit. The construction we describe below trivially achieves  $R_\theta(f,\hat{\theta})=O(1/S)$ for $\theta=0$ and $\theta=1$, and thus in  the remainder of the paper we assume without loss of generality that $\theta\in(0,1)$.

The high-level idea underlying our scheme is to break down the memory-constrained  estimation task into a sequence of memory-constrained (composite) binary hypothesis testing sub-problems. In each such sub-problem, the goal is to decide whether the true underlying parameter $\theta$ satisfies $\{\theta<q\}$ or $\{\theta>p\}$, for some $0<q<p<1$. Those decisions are then used in order to traverse an induced Markov chain in a way that enables us to accurately estimate $\theta$. 


Let us now describe the particular structure of the proposed machine. In our construction, the state space $[S]$ is partitioned into $K$ disjoint sets denoted by $\mathcal{S}_1,\ldots,\mathcal{S}_K$, where the estimation function value is the same inside each $\mathcal{S}_k$, i.e., 
\begin{align}
\hat{\theta}(s)=\hat{\theta}_k,~~~\forall s\in\mathcal{S}_k, \;k\in[K].
\end{align}
The goal is to design a machine for which the stationary distribution of $\{M_n\}$ corresponding to the parameter $\theta$ will concentrate on states that belong to classes $\mathcal{S}_k$ for which $(\theta-\hat{\theta}_k)^2$ is the smallest. This goal is in general easier to achieve when each set consists of a large number of states, which corresponds to small $K$ (as the total number of states $S$ is fixed). On the other hand, the quadratic risk such a machine can attain is obviously limited by the number of different sets $K$, and in particular is $\Omega(1/K^2)$, as there must exist some $\theta\in[0,1]$ at distance $\Omega(1/K)$ from all points $\hat{\theta}_1,\ldots,\hat\theta_K$. Thus, the choice of $K$ should balance the tension between these two contrasting goals; specifically, we will see that the choice $K=\Theta(\sqrt{S})$ is suitable to that end.

Since the estimator $\hat{\theta}$ depends on $\{M_n\}$ only through its class, it is natural to define the \emph{quantized process} $\{Q_n\}_{n=1}^\infty$ obtained by the deterministic scalar mapping \begin{align}
Q_n=\phi(M_n),~~~n=1,2,\ldots,    \label{eq:QnDef}
\end{align}
where $\phi:[S]\to[K]$ maps each state to its set label (namely: $\phi(s)=k$ iff $s\in\mathcal{S}_k$).
The process $\{Q_n\}$, as well as any process on a finite alphabet, consists of \emph{runs} of the same letter. We can therefore decompose it as $\{S_1,\tau_1\},\{S_2,\tau_2\},\ldots$, where $S_i$ denotes the first letter in the $i$th run, and $\tau_i$ denotes its length. We refer to the process $\{S_i\}_{i=1}^\infty$, supported on $[K]$ as the \emph{sampled process}, and to $\{\tau_i\}_{i=1}^\infty$, supported on $\mathbb{N}$, as the \emph{holding times} process. Note that both $\{S_i\}$ and $\{\tau_i\}$ are deterministically determined by $\{Q_n\}$ and hence, by the original process $\{M_n\}$. 
In general, the sampled process can be complicated; however, in our construction, we impose a particular structure that ensures that the sampled process $\{S_n\}$ is also a Markov process. Specifically, for each $k\in[K]$ there is an \textit{entry state} $s_k\in\mathcal{S}_k$, such that all edges going out of a state $\ell\notin\mathcal{S}_k$ to the set $\mathcal{S}_k$, go into the entry state $s_k\in\mathcal{S}_k$. In other words, whenever $M_n$ enters the set $\mathcal{S}_k$ from a different set, it does so through the designated entry state only. This feature guarantees that at the beginning of the $i$th run, the state of the original process $\{M_n\}$ is determined by $S_i$, and consequently $\{S_i\}$ is indeed a Markov process itself. Furthermore, conditioned on $S_i$, the holding time $\tau_i$ is independent of the entire past. We denote the conditional distribution of $\tau_i$ conditioned on the event $S_i=k$, by $P_{T_k}$. It will be convenient to also define the random variables $T_k\sim P_{T_k}$, for $k\in[K]$. In our construction, we further guarantee that any set $\mathcal{S}_k$ is accessible from any other set $\mathcal{S}_j$, $j\neq k$. This ensures that the underlying Markov process $\{M_n\}$ is ergodic, and as a result, so is the sampled process $\{S_n\}$. We refer to the structure described here, i.e., all sets are accessible from one another and have entry states, as a \emph{nested Markov structure}. 

The ergodicity of $\{M_n\}$ immediately implies the ergodicity of the quantized process $\{Q_n\}$, by~\eqref{eq:QnDef}. Denote by $\pi_k$ the stationary probability of state $k$ for the process $\{Q_n\}$. We therefore have that for a machine $f,\hat{\theta}$ of the type described above,
\begin{align}
	    R_\theta = R_\theta(f,\hat\theta) = \lim_{n\rightarrow\infty} \frac{1}{n}\sum_{i=1}^n \E \left(\hat{\theta}(M_i)-\theta\right)^2=\sum_{k=1}^K \pi_k \left(\hat{\theta}_k-\theta\right)^2.\label{eq:MSEgen}
	\end{align}
The next lemma determines the stationary distribution $\{\pi_k\}_{k\in[K]}$ of the quantized process $\{Q_n\}$, in terms of the stationary distribution $\{\mu_k\}_{k\in[K]}$ of the sampled process $\{S_n\}$ and the expected holding times $\{\E[T_k]\}_{k\in[K]}$.
\begin{lemma}\label{lem:Qn_stat}
The unique stationary probability of state $k$ under the process $\{Q_n\}$ is
\begin{align}
    \pi _k =\frac{\E[T_k] \mu_k }{\sum_{j=1}^M\E[T_j] \mu _j}.
\end{align}
\end{lemma}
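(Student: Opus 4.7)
The plan is to recognize that $\{Q_n\}$ is the state trajectory of a \emph{semi-Markov} (Markov-renewal) process, driven by the Markov chain $\{S_j\}$ on $[K]$ with class-dependent holding-time distributions $\{P_{T_k}\}$, and to compute $\pi_k$ as the long-run fraction of time the chain spends in class $k$ via a renewal-theoretic argument. Note that $\{Q_n\}$ itself need not be Markov, so one cannot simply verify balance equations for it; working at the level of runs circumvents this.

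First, I would invoke the ergodicity of $\{M_n\}$, which is established in the excerpt and which via~\eqref{eq:QnDef} yields ergodicity of $\{Q_n\}$. This gives both the existence and the uniqueness of the stationary distribution $\{\pi_k\}$ together with the almost-sure convergence $\frac{1}{n}\sum_{i=1}^n \ind\{Q_i = k\} \to \pi_k$.

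Next, I would re-express this empirical occupation time in terms of runs. Let $N(n) = \max\{N : \sum_{j=1}^N \tau_j \leq n\}$ denote the number of completed runs by time $n$. Then
\[
\sum_{i=1}^n \ind\{Q_i = k\} = \sum_{j=1}^{N(n)} \tau_j \, \ind\{S_j = k\} + R_n,
\]
where $|R_n|$ is bounded by the length of the current (partial) run, and hence is $o(n)$ almost surely since the holding times have finite mean on a finite state space. The pair $(S_j,\tau_j)$ is itself an ergodic Markov chain: the transition draws the next letter $S_{j+1}$ using the kernel of $\{S_j\}$ and then independently samples $\tau_{j+1}$ from $P_{T_{S_{j+1}}}$, so its stationary law has $S$-marginal $\mu$ and conditional law $P_{T_k}$ for $\tau$ given $S=k$. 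Applying the Markov-chain ergodic theorem to the functional $g(s,t) = t\,\ind\{s=k\}$ (integrable since $\E[T_k]<\infty$) yields
\[
\frac{1}{N}\sum_{j=1}^N \tau_j \, \ind\{S_j = k\} \;\xrightarrow{\text{a.s.}}\; \mu_k \, \E[T_k].
\]
Summing over $k$ gives $\frac{1}{N}\sum_{j=1}^N \tau_j \to \sum_\ell \mu_\ell \E[T_\ell]$ almost surely, and inverting this yields $N(n)/n \to 1/\sum_\ell \mu_\ell \E[T_\ell]$ almost surely.

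Putting the two convergences together,
\[
\pi_k \;=\; \lim_{n\to\infty}\frac{N(n)}{n} \cdot \frac{1}{N(n)}\sum_{j=1}^{N(n)} \tau_j \, \ind\{S_j = k\} \;=\; \frac{\mu_k \, \E[T_k]}{\sum_\ell \mu_\ell \, \E[T_\ell]},
\]
as claimed. The only real subtlety is the law-of-large-numbers step for $\sum_{j=1}^N \tau_j \ind\{S_j=k\}$: the summands are not i.i.d., so a bare Kolmogorov LLN does not suffice. The cleanest resolution is the ergodic-theorem route above, viewing $(S_j,\tau_j)$ as an enlarged ergodic Markov chain, which makes integrability and convergence essentially automatic from the finiteness of the underlying state space.
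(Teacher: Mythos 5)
Your argument is correct, but it follows a genuinely different route from the paper's. The paper works at the level of distributions: it builds a two-sided stationary version $\{M'_n\}$, computes $\Pr(Q'_0=k)$ by summing over the time at which the current run began (yielding $\pi_k=\Pr(A_k(0))\,\E[T_k]$), and then uses the flow-balance lemma (Lemma~\ref{lem:flow}) to show that $\{\pi_k/\E[T_k]\}$ satisfies the stationary equations of the sampled chain, concluding by uniqueness of $\mu$. You instead prove the statement as a renewal-reward / ergodic-theorem identity for time averages: decompose the occupation time of class $k$ into runs, apply the LLN to the pair process $(S_j,\tau_j)$, and invert the run-counting process $N(n)$. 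Your route avoids the two-sided stationary construction and the Palm-type decomposition, and delivers the result as an almost-sure statement; the paper's route stays with finite-dimensional distributions and sidesteps the countable-state pair chain and the $o(n)$ remainder estimates (which in your sketch are handled correctly, e.g.\ $\tau_{N(n)+1}=o(n)$ follows from the Ces\`aro convergence of $\frac1N\sum_{j\le N}\tau_j$). One caveat: your description of the pair chain's kernel --- draw $S_{j+1}$ from the kernel of $\{S_j\}$ and then sample $\tau_{j+1}$ independently --- is not accurate as stated, because within a mini-chain the exit \emph{direction} and the exit \emph{time} are dependent, so $S_{j+1}$ is not conditionally independent of $\tau_j$ given $S_j$. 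This does not harm your conclusion: what is true (and all you need) is that $\tau_{j+1}$ is independent of $(S_j,\tau_j)$ given $S_{j+1}$, and then the product law $\nu(k,t)=\mu_k P_{T_k}(t)$ is readily checked to be stationary for the pair chain since only the marginal balance $\sum_k\mu_k P_{kk'}=\mu_{k'}$ is used; with that one-line patch the ergodic-theorem step, and hence the whole proof, goes through.
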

Combining Lemma~\ref{lem:Qn_stat} with~\eqref{eq:MSEgen}, we have that the risk of such machine is
\begin{align}
     R_\theta &=\sum_{k=1}^K\frac{\E[T_k] \mu_k }{\sum_{j=1}^M\E[T_j] \mu _j}\left(\hat{\theta}_k-\theta\right)^2.\label{eq:MSEnestMark}
\end{align}
It is immediately evident from~\eqref{eq:MSEnestMark} that the asymptotic risk attained by a  machine with the nested Markov structure defined above depends only on the stationary distribution of the sampled process $\{S_n\}$ and the expected holding times. Ideally, we would like to construct this machine such that two things would happen for every $\theta$: 
\begin{enumerate}[(i)]
    \item $\{\mu_k\}$ would be concentrated on states whose corresponding estimate $\hat{\theta}_k$ is close to $\theta$;\label{prop1}
    \item The expected holding times for these states will be at least as large as those of other states. \label{prop2}
\end{enumerate}
 
We now describe how our machine is designed to achieve the desired behaviour~\eqref{prop1} of $\{\mu_k\}$. Later, we will see that the desired behavior~\eqref{prop2} of $\{\mathbb{E}[T_k]\}$ follows more or less automatically. 

First, we set our estimators to be\footnote{The denominator is set to $K+2$ rather than $K$ for minor technical reasons, in order to avoid dealing with probabilities on the boundary of the simplex in the analysis.} 
\begin{align}
\hat{\theta}_k=\frac{k}{K+2},~~~k\in[K].
\end{align}
We then design our machine such that the sampled process $\{S_n\}$ is a random walk, that moves either one state left or one state right from each state (except for the extreme states $1$ and $K$ that behave slightly differently). In particular, the $k$th state in $\{S_n\}$ is connected only to states $k+1$ and $k-1$ for all $k\in\{2,\ldots,K-1\}$. The precise diagram for the sampled process $\{S_n\}$ is shown in Figure~\ref{fig:sample_chain}, where the transition probabilities $\{p_k,q_k=1-p_k\}_{k\in[K]}$ will depend on $\theta$ through the construction of the original machine generating the original Markov chain $\{M_n\}$. We design the machine in a way that guarantees that the random walk $\{S_n\}$ has a strong \emph{drift} towards the state $k$ whose corresponding estimator is closest to $\theta$. In particular, if $\theta>\frac{k+1}{K+2}$ then $p_k>1-\eps$  and conversely, if $\theta<\frac{k}{K+2}$ then $p_k<\eps$, for some $\eps<1/2$ and all states $k\in\{2,\ldots,K-1\}$.

\begin{center}
\begin{figure}[H]
\centering
\setlength\belowcaptionskip{-1.4\baselineskip}
\begin{tikzpicture}
  \tikzset{
    >=stealth',
    node distance=0.92cm,
    state/.style={font=\scriptsize,circle, align=center,draw,minimum size=20pt},
    dots/.style={state,draw=none}, edge/.style={->},
  }
  \node [state] (S0) {$1$};
  \node [state] (S0-1)   [right of = S0]   {};
  \node [dots] (dots1)   [right of = S0-1]   {$\cdots$};
  \node [state] (1l) [right of = dots1]  {};
  \node [state ,label=center:$i$] (0)   [right of = 1l] {};
  \node [state] (1r) [right of = 0]   {};
  \node [dots]  (dots2)  [right of = 1r] {$\cdots$};
  \node [state] (S1-1) [right of = dots2]  {};
  \node [state] (S1) [right of = S1-1] {$K$};
  \path [->,draw,thin,font=\footnotesize]  (S0-1) edge[bend left=45] node[below ] {$q_2$} (S0);
  \path [->,draw,thin,font=\footnotesize]  (0) edge[bend left=45] node[below] {$q_i$} (1l);
  \path [->,draw,thin,font=\footnotesize]  (S1) edge[bend left=45] node[below right] {$1$} (S1-1);
  \path [->,draw,thin,font=\footnotesize]  (1r) edge[bend left=45] node[below right] {$q_{i+1}$} (0);
  \path [->,draw,thin,font=\footnotesize]  (S1-1) edge[bend left=45] node[below right] {$q_{K-1}$} (dots2);
  \path [->,draw,thin,font=\footnotesize]  (1l) edge[bend left=45] node[below ] {$q_{i-1}$} (dots1);
  \path [->,draw,thin,font=\footnotesize]  (S0-1) edge[bend left=45] node[above left] {$p_2$} (dots1);
  \path [->,draw,thin,font=\footnotesize]  (1l) edge[bend left=45] node[above left] {$p_{i-1}$} (0);
  \path [->,draw,thin,font=\footnotesize]  (0) edge[bend left=45] node[above ] {$p_i$} (1r);
  \path [->,draw,thin,font=\footnotesize]  (S0) edge[bend left=45] node[above ] {$1$} (S0-1);
  \path [->,draw,thin,font=\footnotesize]  (1r) edge[bend left=45] node[above] {$p_{i+1}$} (dots2);
  \path [->,draw,thin,font=\footnotesize]  (S1-1) edge[bend left=45] node[above ] {$p_{K-1}$} (S1);
\end{tikzpicture}
\caption{A sampled chain of $K$ states.}\label{fig:sample_chain}
\end{figure}
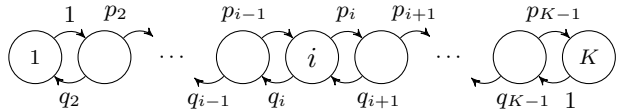
\end{center}

The desired drift behavior is enabled by constructing the sets $\mathcal{S}_1,\ldots,\mathcal{S}_K$ as \emph{mini-chains}, where the $k$th mini-chain consists of $N_k$ states, and is designed to solve the composite binary hypothesis testing problem: $\mathcal{H}_0:\left\{\theta>\frac{k+1}{K+2}\right\}$ vs. $\mathcal{H}_1:\left\{\theta<\frac{k}{K+2}\right\}$. Each mini-chain $\mathcal{S}_k$ is initialized in its entry state $s_k$, and eventually moves to the entry state $s_{k+1}$ of mini-chain $\mathcal{S}_{k+1}$ if it decided in favor of hypothesis $\mathcal{H}_0$, or to the entry state $s_{k-1}$ of mini-chain $\mathcal{S}_{k-1}$ if it decided in favor of hypothesis $\mathcal{H}_1$. The time it takes it to ``make a decision'' is the random holding time with some distribution $P_{T_k}$. Note that if the error probability of the machine is smaller than $\eps<1/2$ under both hypotheses, we will indeed attain the desired drift behavior. Our goal now is to design mini-chains that attain small error probabilities with as few states as possible. To that end, we appeal to~\cite{berg2020binary}, where the authors defined the following machine.\footnote{Their machine was designed to solve the \emph{simple} binary hypothesis test $\mathcal{H}_0:\{\theta=p\}$ vs. $\mathcal{H}_1:\{\theta=q\}$, but as our analysis demonstrates, the difference between the two problems is not significant.}


\begin{definition}
  $\isit(N,p,q)$ is the machine with $N\geq 4$ states depicted in Figure~\ref{fig:BHT_Machine},
  designed to decide between the hypotheses $\mathcal{H}_0:\{\theta>p\}$ vs. $\mathcal{H}_1:\{\theta<q\}$, for some $0<q<p<1$. The machine is initialized at state $s$ and evolves according to the sequence of input bits $X_1,X_2,\ldots$. If the machine observes a run of $N-s$ ones before observing a run of $s-1$ zeros, it decides $\mathcal{H}_0$ and exists right. Otherwise, it decides $\mathcal{H}_1$ and exists left.
  The initial state of the machine is $s=f(N,p,q)$, where
  \begin{align}
    f(N,p,q) \triangleq  2+\left\lceil \frac{\log pq}{\log p(1-p)+\log q(1-q)}(N-3)\right\rfloor,\label{eq:start_state}
  \end{align}
  is an integer between $2$ and $N-1$. We denote the (worst case) error probability of the machine by $\Pe^{\isit(N,p,q)}=\max\left\{p^0_1,p^1_0\right\}$, where
  \begin{align}
  p^0_1&=\sup_{\theta<q}~~\Pr_{X_1,X_2\ldots\stackrel{i.i.d.}{\sim}\Bern(\theta)}\left(\isit(N,p,q)\text{ decides } \mathcal{H}_0 \right),\\
  p^1_0&=\sup_{\theta>p}~~\Pr_{X_1,X_2\ldots\stackrel{i.i.d.}{\sim}\Bern(\theta)}\left(\isit(N,p,q)\text{ decides } \mathcal{H}_1 \right).
  \end{align}


\end{definition}
\begin{center}
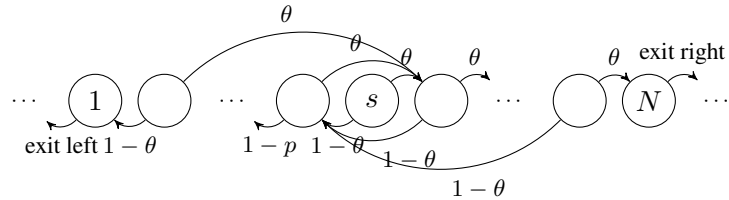
\begin{figure}[H]
\centering
\setlength\belowcaptionskip{-1.4\baselineskip}
\begin{tikzpicture}
  \tikzset{
    >=stealth',
    node distance=0.92cm,
    state/.style={font=\scriptsize,circle, align=center,draw,minimum size=20pt},
    dots/.style={state,draw=none}, edge/.style={->},
  }
  \node [state ,label=center:$1$] (S0)  {} ;
  \node [state] (S0-1)   [right of = S0]   {};
  \node [dots] (dots1)   [right of = S0-1]   {$\cdots$};
  \node [state] (1l) [right of = dots1]  {};
  \node [state ,label=center:$s$] (0)   [right of = 1l] {};
  \node [state] (1r) [right of = 0]   {};
  \node [dots]  (dots2)  [right of = 1r] {$\cdots$};
  \node [state] (S1-1) [right of = dots2]  {};
  \node [state ,label=center:$N$] (S1) [right of = S1-1]  {};
   \node [dots]  (dots3)  [right of = S1] {$\cdots$};
    \node [dots]  (dots4)  [left of = S0] {$\cdots$};
  \path [->,draw,thin,font=\footnotesize]  (S0-1) edge[bend left=45] node[below ] {$1-\theta$} (S0);
  \path [->,draw,thin,font=\footnotesize]  (0) edge[bend left=45] node[below] {$1-\theta$} (1l);
  \path [->,draw,thin,font=\footnotesize]  (1r) edge[bend left=45] node[below right] {$1-\theta$} (1l);
  \path [->,draw,thin,font=\footnotesize]  (S1-1) edge[bend left=45] node[below right] {$1-\theta$} (1l);
  \path [->,draw,thin,font=\footnotesize]  (1l) edge[bend left=45] node[below ] {$1-p$} (dots1);
  \path [->,draw,thin,font=\footnotesize]  (S0-1) edge[bend left=45] node[above left] {$\theta$} (1r);
  \path [->,draw,thin,font=\footnotesize]  (1l) edge[bend left=45] node[above left] {$\theta$} (1r);
  \path [->,draw,thin,font=\footnotesize]  (0) edge[bend left=45] node[above ] {$\theta$} (1r);
  \path [->,draw,thin,font=\footnotesize]  (1r) edge[bend left=45] node[above] {$\theta$} (dots2);
   \path [->,draw,thin,font=\footnotesize]  (S0) edge[bend left=45] node[below] {exit left} (dots4);
  \path [->,draw,thin,font=\footnotesize]  (S1-1) edge[bend left=45] node[above ] {$\theta$} (S1);
 \path [->,draw,thin,font=\footnotesize]  (S1)  edge[bend left=45]  node[above ]{exit right} (dots3);

\end{tikzpicture}
\caption{$\isit(N,p,q)$ - Deterministic Binary Hypothesis Testing Machine  }\label{fig:BHT_Machine}
\end{figure}
\end{center}


The next lemma demonstrates that with $N=O(K)$ states, the machine $\isit(N,p,q)$ can decide whether $\theta>p$ or $\theta<q=p-1/K$ with constant error probability $\eps<1/2$.
Thus, the desired drift can be attained by mini-chains of $O(K)$ states.

\begin{lemma} \label{lem:N_eps}
For any $\frac{2}{K}\leq  p\leq 1-\frac{1}{K}$, $q=p-\frac{1}{K}$ and $0<\eps<1/2$, let\footnote{Logarithms in this paper are taken to base $2$.} 
\begin{align}
N = N(\eps,p,K) \triangleq 3+\left\lceil K\cdot 6\log\frac{2}{\eps\cdot\left(p-\frac{1}{K}\right)(1-p)} \right\rceil.
\label{eq:NepK}
\end{align}
Then
\begin{align}
\Pe^{\isit(N,p,q)}<\eps.
\end{align}
\end{lemma}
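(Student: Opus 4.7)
The plan is to derive closed-form bounds on the two boundary error probabilities and verify that the choice $s=f(N,p,q)$ was calibrated precisely to balance them at the right drift rate. Set $a=N-s$ and $b=s-1$, so ``$\isit$ decides $\mathcal{H}_0$'' is the event ``a run of $a$ ones appears before a run of $b$ zeros''. By a coupling with common uniform thresholds, if $X_i^1\leq X_i^2$ sample-wise then the state driven by $X^2$ lies weakly to the right of the state driven by $X^1$ at every time; hence $P_\theta(\text{decide }\mathcal{H}_0)$ is nondecreasing in $\theta$, and the suprema $p^1_0, p^0_1$ are attained at $\theta=p$ and $\theta=q$ respectively.

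A first-step analysis on the race between the current $1$-run and $0$-run (reset whenever either is broken) gives linear recursions on the hitting probabilities which solve to
\[
P_\theta(\text{decide }\mathcal{H}_0)=\frac{\theta^{a-1}\bigl(1-(1-\theta)^b\bigr)}{\theta^{a-1}+(1-\theta)^{b-1}-\theta^{a-1}(1-\theta)^{b-1}}.
\]
The denominator factors as $\theta^{a-1}+(1-\theta)^{b-1}(1-\theta^{a-1})$, so it is at least $\theta^{a-1}$ and, by symmetry, at least $(1-\theta)^{b-1}$. Hence $P_p(\mathcal{H}_1)\leq (1-p)^{b-1}/p^{a-1}$ and $P_q(\mathcal{H}_0)\leq q^{a-1}/(1-q)^{b-1}$.

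Write $A=\log(1/p),\, B=\log(1/(1-p)),\, C=\log(1/q),\, D=\log(1/(1-q))$; both bounds are below $\eps$ iff $(b-1)B-(a-1)A>\log(1/\eps)$ and $(a-1)C-(b-1)D>\log(1/\eps)$. A direct calculation shows that the real-valued choice $b-1=r(N-3)$ with $r=(A+C)/(A+B+C+D)$---the recipe encoded in $f(N,p,q)$---makes both left-hand sides equal to $(N-3)\kappa$, where $\kappa:=(BC-AD)/(A+B+C+D)$. Rounding $r(N-3)$ to an integer shifts each side by at most $(A+B+C+D)/2\leq B+C$ (using $A\leq C$ and $D\leq B$).

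The symmetric identity $BC-AD=\tfrac12(B+D)\log(p/q)+\tfrac12(A+C)\log((1-q)/(1-p))$, combined with $\log(p/q)\geq 1/(Kp\ln 2)$ and $\log((1-q)/(1-p))\geq 1/(K(1-q)\ln 2)$ (both from $-\ln(1-x)\geq x$, applied to $x=1/(Kp)$ and $x=1/(K(1-q))$) and the inequalities $p,1-q\leq 1$, yields
\[
BC-AD\geq \frac{1}{2K\ln 2}\Bigl(\frac{B+D}{p}+\frac{A+C}{1-q}\Bigr)\geq \frac{A+B+C+D}{2K\ln 2},
\]
so $\kappa\geq 1/(K\ln 4)$. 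The lemma's $N-3\geq 6K\log(2/(\eps q(1-p)))$ then gives $(N-3)\kappa\geq (6/\ln 4)\bigl(\log(2/\eps)+B+C\bigr) > \log(1/\eps)+B+C$, which dominates the slack $(A+B+C+D)/2$ from rounding and completes the proof. The hardest part is this last step: naively $BC-AD$ can be much smaller than $A+B+C+D$ when one of $A,B,C,D$ is tiny, and only the symmetrization combined with the almost-trivial $(B+D)/p+(A+C)/(1-q)\geq A+B+C+D$ delivers the dimension-free drift rate $\kappa=\Theta(1/K)$ that matches the lemma's linear-in-$K$ scaling of $N$.
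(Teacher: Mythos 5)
Your proof is correct, and while it shares the paper's overall skeleton (reduce to the boundary values $\theta\in\{p,q\}$ by the same monotone coupling, invoke the explicit run-race probabilities, and balance the two exponents via the choice of $s$ in~\eqref{eq:start_state}), the execution of the key quantitative step is genuinely different and arguably cleaner. The paper (Lemma~\ref{lem:close_hyp}) keeps the exact expressions for $p_1(p),p_0(q)$, pays a multiplicative $1/p_{\min}$ for rounding $s^*$ plus a separate argument that the $\tfrac{1}{1-(1-p)^{s^*-2}}$-type factors are at most $2$, and then lower bounds the exponent $r(p,p-\tfrac1K)$ through a somewhat involved expansion in terms of the binary entropy, followed by the constant-$6$ bound in~\eqref{eq:bin_ent_bound}--\eqref{eq:p_bound}. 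You instead bound the Feller-type probabilities simply by $(1-p)^{b-1}/p^{a-1}$ and $q^{a-1}/(1-q)^{b-1}$ (the denominator is at least $p^{a-1}$, resp.\ $(1-q)^{b-1}$, so no extra factor of $2$ is needed), treat the rounding of $s^*$ additively in the exponent with slack at most $\tfrac12(A+B+C+D)\le B+C$, and obtain the drift rate via the symmetrization identity $BC-AD=\tfrac12(B+D)\log(p/q)+\tfrac12(A+C)\log\bigl((1-q)/(1-p)\bigr)$ together with $-\ln(1-x)\ge x$, giving $\kappa=r(p,q)\ge \tfrac{1}{K\ln 4}$ in two lines; I checked the identity and the resulting constants ($6/\ln 4>1$ comfortably absorbs both $\log(1/\eps)$ and the rounding slack $B+C$, since $N-3\ge 6K(\log(2/\eps)+B+C)$). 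Two small points worth making explicit if you write this up: your claimed formula for $P_\theta(\mathcal{H}_0)$ is the classical run-race (Feller) formula, consistent with the expressions the paper imports from \cite{berg2020binary}, so state or cite its derivation; and note that the positivity of both exponents after rounding forces $2< s< N-1$, so the machine is well defined (the paper notes $3<s^*<N-2$ separately). What your route buys is a shorter, more transparent bound with explicit constants; what the paper's route buys is an exponent expressed through $H_b(p)$, matching the form used in \cite{berg2020binary}.
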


We therefore take the $k$th mini-chain $\m{S}_k$ as the machine $\isit(N_k,p,q)$ with $q=\frac{k}{K+2}$,  $p=q+\frac{1}{K+2}=\frac{k+1}{K+2}$, and $N_k=N(\eps,\frac{k+1}{K+2},K+2)$. The total number of states in our machine is therefore (see calculation in the appendix)
\begin{align}
S&=\sum_{k=1}^K N_k=\sum_{k=1}^K N\left(\eps,\frac{k+1}{K+2},K+2\right)\leq 6(K+2)^2\log \left(\frac{2e}{\epsilon}\right)\label{eq:Ssum},
\end{align}
and the sampled chain $\{S_n\}$ indeed satisfies the desired drift property: for all $2\leq k\leq K-1$ we have that if $\theta>\frac{k+1}{K+2}$ then $p_k>1-\eps$ whereas if $\theta<\frac{k}{K+2}$ then $p_k<\eps$. Note that we did not quantify $p_k$ for the case where $\theta\in\left[\frac{k}{K+2},\frac{k+1}{K+2}\right]$, but as will become apparent below, it is indeed not needed for our analysis. Also note that whenever the sampled chain reaches state $1$ it will immediately move back to state $2$, and whenever it reaches state $K$ it will immediately move back to state $K-1$ (that is, $p_1=1$ and $p_K=0$), but the holding times in those states are nevertheless random (and may be very large if the underlying $\theta$ is very close to $0$ or $1$, and dictated by the time it takes for the corresponding $\isit(N,p,q)$ mini-chains $\m{S}_1$ and $\m{S}_K$ to reach a decision). The next lemma shows that the drift property implies that if $\theta\in\left[\frac{k}{K+2},\frac{k+1}{K+2}\right]$, then the stationary probability $\mu_j$ of the $j$th state in the sampled chain decreases exponentially with the ``distance'' $|j-k|$.



\begin{lemma}\label{lem:trans_prob}
Assume that $\theta\in\left[\frac{k}{K+2},\frac{k+1}{K+2}\right]$. Then, the stationary distribution of the sampled process $\{S_n\}$ induced by the machine described above satisfies
\begin{align}
  \mu_{k-i}\leq \mu_{k-1}\left(\frac{\epsilon}{1-\epsilon}\right)^{i-1}  
\end{align}
for $1 \leq i \leq k-1$, and
\begin{align}
  \mu_{k+i}\leq  \mu_{k+1}\left(\frac{\epsilon}{1-\epsilon}\right)^{i-1}  
\end{align}
for $1\leq i \leq K-k$.
\end{lemma}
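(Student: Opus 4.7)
The plan is to exploit the fact that the sampled process $\{S_n\}$ is a birth-death chain on $[K]$, for which the stationary distribution is determined by the one-step detailed balance equations
\begin{align}
\mu_j\, p_j \;=\; \mu_{j+1}\, q_{j+1}, \qquad j=1,\ldots,K-1.
\end{align}
Equivalently, $\mu_{j} = \mu_{j+1}\,q_{j+1}/p_j$ going left, and $\mu_{j+1} = \mu_{j}\,p_j/q_{j+1}$ going right. So the claim will reduce to showing that, under the assumption $\theta\in[\tfrac{k}{K+2},\tfrac{k+1}{K+2}]$, each such ratio encountered while moving away from the ``correct'' state $k$ is bounded by $\epsilon/(1-\epsilon)$.

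First I would translate the drift property into bounds on $p_j,q_j$. Fix $\theta\in[\tfrac{k}{K+2},\tfrac{k+1}{K+2}]$. For any $2\leq j\leq k-1$ we have $\theta\geq \tfrac{k}{K+2}\geq \tfrac{j+1}{K+2}$, so by the construction $p_j>1-\epsilon$ and hence $q_j<\epsilon$. Symmetrically, for any $k+1\leq j\leq K-1$ we have $\theta\leq \tfrac{k+1}{K+2}\leq \tfrac{j}{K+2}$, so $p_j<\epsilon$ and $q_j>1-\epsilon$. (The boundary equalities can be absorbed by the same argument applied to the open interval and continuity of $\mu$ in $\theta$, or by noting that the drift inequalities in the construction hold as stated.)

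Now I would iterate detailed balance. For the left tail, starting from $\mu_{k-1}$, apply the identity $\mu_{k-i} = \mu_{k-i+1}\cdot q_{k-i+1}/p_{k-i}$ for $i=2,3,\ldots$. At each such step, $k-i+1$ lies in $\{2,\ldots,k-1\}$ so $q_{k-i+1}<\epsilon$, while $k-i\in\{1,\ldots,k-2\}$, so either $k-i\geq 2$ (giving $p_{k-i}>1-\epsilon$) or $k-i=1$ (giving $p_1=1>1-\epsilon$). In either case the ratio is at most $\epsilon/(1-\epsilon)$, and telescoping gives
\begin{align}
\mu_{k-i} \;\leq\; \mu_{k-1}\left(\frac{\epsilon}{1-\epsilon}\right)^{i-1}, \qquad 1\leq i\leq k-1.
\end{align}
A completely symmetric argument handles the right tail: iterate $\mu_{k+i} = \mu_{k+i-1}\cdot p_{k+i-1}/q_{k+i}$ for $i=2,3,\ldots$, using that $p_{k+i-1}<\epsilon$ for $k+i-1\in\{k+1,\ldots,K-1\}$ (and $p_K=0<\epsilon$ at the right boundary) and $q_{k+i}>1-\epsilon$ for $k+i\in\{k+1,\ldots,K-1\}$ (and $q_K=1>1-\epsilon$ at $k+i=K$), yielding the second inequality.

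There is no real obstacle beyond bookkeeping; the only mildly delicate points are (i) verifying that the drift bounds apply at every intermediate state as one walks from $k\pm 1$ out to the extremes, and (ii) handling the reflecting boundary states $1$ and $K$, which I address above by noting $p_1=1$ and $q_K=1$ both trivially dominate $1-\epsilon$, so the ratio bound $\epsilon/(1-\epsilon)$ is preserved at the endpoints.
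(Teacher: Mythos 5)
Your proof is correct and follows essentially the same route as the paper: the detailed-balance relation $\mu_j p_j=\mu_{j+1}q_{j+1}$ you invoke for the birth-death structure is exactly what the paper derives from its cut/flow lemma, and the rest is the same telescoping of ratios bounded by $\epsilon/(1-\epsilon)$ using the drift property. The only cosmetic difference is that you absorb the reflecting states $1$ and $K$ (via $p_1=1$, $q_K=1$) into the same iteration, whereas the paper treats them in a separate final step.
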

This shows that the stationary distribution of the sampled chain $\{S_n\}$ is indeed concentrated on the desired states. The next lemma deals with the expected holding times, and lower bounds the ratio between the expected holding time in the ``correct state'' $k$ and the expected holding time in any other state of the sampled chain. 
\begin{lemma}\label{lem:wait_time}
If $\theta < \frac{j}{K+2}$,
then the expected holding time in state $i$ satisfies
\begin{align}
    \E [T_j] \geq (1-\epsilon)\E [T_i] \label{eq:uppbound1}
\end{align}
for all $i>j$. Similarly, if $\theta >\frac{j+1}{K+2}$, then the expected holding time in state $i$ satisfies
\begin{align}
    \E [T_j] \geq (1-\epsilon)\E [T_i] \label{eq:uppbound2}
\end{align}
for all $i<j$.
\end{lemma}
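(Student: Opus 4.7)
The strategy is to express the holding time $T_k$ as a hitting time on the underlying Bernoulli($\theta$) source, and then combine a monotonicity property of the mini-chain parameters with the error bound of Lemma~\ref{lem:N_eps}. Unpacking the transitions of $\isit(N_k,p_k,q_k)$, starting from $s_k$, the state at time $n$ equals $s_k$ plus the current ones-run length (if the last bit was a one) or $s_k$ minus the current zeros-run length (if the last bit was a zero); the machine exits at state $1$ precisely when the input accumulates $s_k-1$ consecutive zeros, and at state $N_k$ precisely when it accumulates $N_k-s_k$ consecutive ones. Coupling all mini-chains to a common Bernoulli($\theta$) sequence $X_1,X_2,\ldots$, we thus obtain
\begin{equation*}
T_k = R_0(s_k-1)\wedge R_1(N_k-s_k),
\end{equation*}
where $R_b(m)$ denotes the first time in $X_1,X_2,\ldots$ that a run of $m$ consecutive $b$-bits appears.

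A direct inspection of~\eqref{eq:start_state} and~\eqref{eq:NepK}, under our choices $p_k=(k+1)/(K+2)$ and $q_k=k/(K+2)$, shows that $k\mapsto s_k$ is non-increasing while $k\mapsto N_k-s_k$ is non-decreasing. The key point is that the coefficient $\alpha_k$ in~\eqref{eq:start_state} decreases smoothly from near $1$ (at small $k$, where $\log p_k$ and $\log q_k$ are large in magnitude and dominate both numerator and denominator) to near $0$ (at $k$ close to $K$, where $\log p_kq_k\to 0$ while $\log(1-p_k)\to-\infty$ keeps the denominator large in magnitude). This verification is elementary and I would defer it to an appendix. Given the monotonicity, the coupled representation yields $T_i\leq R_0(s_i-1)\leq R_0(s_j-1)$ pointwise for $i>j$.

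Now fix $\theta<j/(K+2)$ and $i>j$, and define the correct-decision event $A_j:=\{R_0(s_j-1)\leq R_1(N_j-s_j)\}$. Since $\theta<q_j$, Lemma~\ref{lem:N_eps} gives $\Pr(A_j)\geq 1-\epsilon$. On $A_j$, $T_j=R_0(s_j-1)\geq T_i$, so
\begin{equation*}
\E[T_j]\geq \E[T_j\,\ind_{A_j}]\geq \E[T_i\,\ind_{A_j}].
\end{equation*}
The symmetric inequality~\eqref{eq:uppbound2} follows by the same argument with the roles of $0$ and $1$ interchanged.

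The remaining step, and the technical crux, is to upgrade this to $\E[T_i\,\ind_{A_j}]\geq (1-\epsilon)\E[T_i]$, equivalently $\E[T_i\,\ind_{A_j^c}]\leq\epsilon\E[T_i]$. A naive product bound fails because $T_i$ and $A_j^c$ are positively correlated---an atypically long initial ones-run simultaneously triggers $A_j^c$ and inflates $T_i$. My plan is to apply the strong Markov property of $\{X_n\}$ at the stopping time $\tau:=R_0(s_j-1)\wedge R_1(N_j-s_j)$ that determines $A_j$: conditioned on $A_j^c$, the bits after $\tau$ are fresh i.i.d.\ Bernoulli($\theta$), and since on $A_j^c$ we have only seen a ones-run of length $N_j-s_j<N_i-s_i$, the residual time for $\m{S}_i$ to exit is dominated in expectation by $\E[T_i]$, whereas the contribution from bits before $\tau$ is essentially $\tau\leq \E[T_j]$. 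A fallback route, should this regeneration argument fail to produce clean constants, is to solve the finite linear recursion for $\E[T_k]$ explicitly in terms of $\theta,s_k,N_k-s_k$ and compare the closed forms for $i$ and $j$ directly, exploiting the monotonicity of $s_k$ and $N_k-s_k$ term by term; the main technical effort lies here.
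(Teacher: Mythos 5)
There is a genuine gap, and it sits exactly where you flag it. Your reduction up to $\E[T_j]\geq \E[T_i\ind(A_j)]$ is sound (the common-input coupling, the monotonicity of $s_k$ and $N_k-s_k$, and $\Pr(A_j)\geq 1-\eps$ from Lemma~\ref{lem:N_eps} all match what the paper uses), but the decisive inequality $\E[T_i\ind(A_j)]\geq(1-\eps)\E[T_i]$ is only announced as a plan, with a fallback in case it fails --- so the proof is not complete. Worse, the target you set yourself, $\E[T_i\ind(A_j^c)]\leq\eps\,\E[T_i]$, is likely not reachable by the regeneration argument you sketch: applying the strong Markov property at $\tau=T_j$ and using the head-start domination gives $\E[T_i\ind(A_j^c)]\leq\E[\tau\ind(A_j^c)]+\Pr(A_j^c)\E[T_i]$, and the extra term $\E[T_j\ind(A_j^c)]$ need not be negligible (on $A_j^c$ the exit of $\m{S}_j$ is through the rare ones-run, which can take a long time when $\theta$ is small). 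The lemma can still be rescued along your lines, but only by \emph{not} discarding $\E[T_j\ind(A_j^c)]$: write $\E[T_j]\geq\E[T_i\ind(A_j)]+\E[T_j\ind(A_j^c)]$ and absorb the problematic term, yielding $\E[T_j]\geq\E[T_i\ind(A_j)]+\E[T_i\ind(A_j^c)]-\eps\E[T_i]=(1-\eps)\E[T_i]$. That bookkeeping step is the missing idea; as written, the proposal proves only $\E[T_j]\geq\E[T_i]-\E[T_i\ind(A_j^c)]$ with no control on the subtracted term.

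For comparison, the paper avoids conditioning on the decision event altogether. Writing $T_j^0=R_0(s_j-1)$ and $T_j^1=R_1(N_j-s_j)$, it uses the memorylessness of the input to derive the exact identities $\E[T_j]=\Pr(W^j\to 0)\,\E[T_j^0]=\Pr(W^j\to 1)\,\E[T_j^1]$ (upon a wrong exit the relevant run counter restarts, so the overshoot $T_j^0-T_j$ has mean $\E[T_j^0]$). Then $\theta<\frac{j}{K+2}$ gives $\E[T_j]\geq(1-\eps)\E[T_j^0]\geq(1-\eps)\E[T_i^0]\geq(1-\eps)\E[T_i]$, using the same monotonicity of $s_k$ and $T_i\leq T_i^0$ that you also invoke. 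This is essentially the clean closed-form route you list as your fallback; it sidesteps the correlation issue that blocks your primary plan.
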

We now combine~\eqref{eq:MSEnestMark} with Lemma~\ref{lem:trans_prob} and Lemma~\ref{lem:wait_time} in order to upper bound the asymptotic risk attained by our machine, and establish the claim $R_{\theta}=O(1/S)$ for all $\theta\in(\frac{1}{K+2},\frac{K+1}{K+2})$. The cases where $\theta\in[0,\frac{1}{K+2})$ and $\theta\in(\frac{K+1}{K+2},1]$ then follow easily from minor adjustments, and are treated in the appendix.

Assume that $\frac{k}{K+2}\leq \theta \leq \frac{k+1}{K+2}$ for some $k\in[K]$.  From~\eqref{eq:MSE}, the asymptotic risk is then
\begin{align}
   R_\theta &=\sum_{i=1}^K\frac{\E[T_i] \mu_i }{\sum_{j=1}^K\E[T_j] \mu _j}\left(\frac{i}{K+2}-\theta\right)^2
  \\&=\sum_{i=1}^{k-1}\frac{\E[T_i] \mu_i }{\sum_{j=1}^K\E[T_j] \mu _j}\left(\frac{i}{K+2}-\theta\right)^2+\frac{\E[T_k] \mu_k }{\sum_{j=1}^K\E[T_j] \mu _j}\left(\frac{k}{K+2}-\theta\right)^2\nonumber\\&+\sum_{i=k+1}^K\frac{\E[T_i] \mu_i }{\sum_{j=1}^K\E[T_j] \mu _j}\left(\frac{i}{K+2}-\theta\right)^2
  \\ &\leq \frac{1}{1-\epsilon}\sum_{i=1}^{k-1}\frac{\E[T_{k-1}] \mu_{k-1} }{\sum_{j=1}^K\E[T_j] \mu _j} \frac{\mu_i}{\mu_{k-1}}\left(\frac{i}{K+2}-\theta \right)^2+\frac{1}{(K+2)^2}\nonumber\\&+\frac{1}{1-\epsilon}\sum_{i=k+1}^K\frac{\E[T_{k+1}] \mu_{k+1} }{\sum_{j=1}^K\E[T_j] \mu _j}\frac{\mu_i}{\mu_{k+1}}\left(\frac{i}{K+2}-\theta\right)^2\label{eq:exp_bound}\\&\leq  
 \frac{1}{1-\epsilon}\sum_{i=1}^{k-1} \left(\frac{\epsilon}{1-\epsilon}\right)^{i-1} \left(\frac{i+1}{K+2}\right)^2+\frac{1}{(K+2)^2}+ \frac{1}{1-\epsilon}\sum_{i=1}^{K-k}\left(\frac{\epsilon}{1-\epsilon}\right)^{i-1} \left(\frac{i+1}{K+2}\right)^2\label{eq:stationary_bound}\\&\leq \frac{1}{(K+2)^2}\cdot \frac{1}{1-\epsilon}\left(2\cdot\sum_{i=1}^{\infty} \left(\frac{\epsilon}{1-\epsilon}\right)^{i-1}(i+1)^2+1\right)\label{eq:sum_bound}\\&\leq \frac{6\log \left(\frac{2e}{\epsilon}\right)}{S}\left(\frac{2\epsilon}{(1-2\epsilon)^3}+\frac{8(1-\epsilon)}{(1-2\epsilon)^2}+\frac{1}{1-\epsilon}\right),\label{eq:sum_identity}
\end{align}
where~\eqref{eq:exp_bound} follows from Lemma~\ref{lem:wait_time}, ~\eqref{eq:stationary_bound} follows from Lemma~\ref{lem:trans_prob} and since $\frac{\E[T_j] \mu_j }{\sum_{k=1}^M\E[T_k] \mu _k}\leq 1$,~\eqref{eq:sum_bound} is since we only add positive terms, and~\eqref{eq:sum_identity} is due to the identity $\sum_{i=0}^\infty q^i (i+2)^2=\frac{q(1+q)+4(1-q)}{(1-q)^3}$ and by substituting~\eqref{eq:Ssum}. Finally, substituting $\eps=1/100$ into~\eqref{eq:sum_identity} gives $R_{\theta}<\frac{600}{S}$.

\section{Proofs of Technical Claims}

The following simple lemma will be useful for the proofs of Lemma~\ref{lem:Qn_stat}
and Lemma~\ref{lem:trans_prob}.
\begin{lemma}\label{lem:flow}
Let $\{X_n\}$ be a stationary process over some alphabet $\mathcal{S}$. Then for any disjoint partition $\mathcal{C}\cup \mathcal{C}'=\mathcal{S}$, it holds that
\begin{align}
\Pr(X_n\in\mathcal{C},X_{n+1}\in\mathcal{C}')=\Pr(X_n\in\mathcal{C}',X_{n+1}\in\mathcal{C}).
\end{align}
\end{lemma}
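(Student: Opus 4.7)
My plan is to derive this identity directly from stationarity via a conservation-of-flow argument at the level of the pair $(X_n,X_{n+1})$. The core observation is that stationarity implies $\Pr(X_n\in\mathcal{C})=\Pr(X_{n+1}\in\mathcal{C})$, so it suffices to express each of these marginals by conditioning on the other time instant and compare.

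Concretely, I would use the law of total probability together with the fact that $\mathcal{C}\cup\mathcal{C}'=\mathcal{S}$ is a disjoint partition to write
\begin{align}
\Pr(X_n\in\mathcal{C}) &= \Pr(X_n\in\mathcal{C},\,X_{n+1}\in\mathcal{C}) + \Pr(X_n\in\mathcal{C},\,X_{n+1}\in\mathcal{C}'),\\
\Pr(X_{n+1}\in\mathcal{C}) &= \Pr(X_n\in\mathcal{C},\,X_{n+1}\in\mathcal{C}) + \Pr(X_n\in\mathcal{C}',\,X_{n+1}\in\mathcal{C}).
\end{align}
Stationarity equates the left-hand sides, and the ``stay in $\mathcal{C}$'' term $\Pr(X_n\in\mathcal{C},X_{n+1}\in\mathcal{C})$ is common to both right-hand sides, so subtracting yields the claimed equality of the two cross-boundary probabilities.

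There is no real obstacle here: the statement is a one-line book-keeping identity expressing that in steady state the probability flux from $\mathcal{C}$ into $\mathcal{C}'$ must equal the probability flux from $\mathcal{C}'$ into $\mathcal{C}$, and the only ingredient is that the marginal distribution of $(X_n,X_{n+1})$ is shift-invariant. It is worth noting, for later use in proving Lemma~\ref{lem:Qn_stat} and Lemma~\ref{lem:trans_prob}, that this gives the standard ``edge cut'' balance equation that will let us relate stationary probabilities of groups of states to transitions across a chosen cut.
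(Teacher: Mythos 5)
Your proposal is correct and is essentially the paper's own argument: decompose the marginal probability of one cell of the partition at times $n$ and $n+1$ via the law of total probability, equate the two by stationarity, and cancel the common ``stay'' term (the paper does this with $\mathcal{C}'$ where you use $\mathcal{C}$, which is immaterial). No further comment needed.
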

\begin{proof}
For any disjoint partition $\mathcal{C}\cup \mathcal{C}'=\mathcal{S}$  we have
\begin{align}
\Pr(X_{n+1}\in\mathcal{C}')=\Pr(X_{n}\in\mathcal{C}')=\Pr(X_{n}\in\mathcal{C}',X_{n+1}\in\mathcal{C}')+\Pr(X_{n}\in\mathcal{C}',X_{n+1}\in\mathcal{C}).
\end{align}
Subtracting $\Pr(X_{n}\in\mathcal{C}',X_{n+1}\in\mathcal{C}')$ from both sides, establishes the claim.
\end{proof}

\begin{proof}\textbf{of 
Lemma~\ref{lem:Qn_stat}} :
The proof is very similar to the derivation of the invariant measure of a continuous-time Markov chain. Let $\{M'_n\}$ be the process defined as follows:
\begin{enumerate}
    \item Draw $M'_0$ according to the stationary distribution of $M_n$.
    \item For $n> 0$, draw $M'_{n+1} | M'_n \sim W$, where $W$ is the Markov kernel of our chain.
    \item For $n < 0$, draw $M'_{n-1} | M'_n \sim W'$, where $W'$ is the reverse Markov kernel corresponding to the stationary distribution. 
\end{enumerate}
Clearly, $\{M'_n\}$ is a stationary ergodic process with marginal distribution equal to the stationary distribution of $\{M_n\}$. 
Let $Q'_n = \phi(M'_n)$ where $\phi$ is the mapping to the set label (similar to $Q_n$). Clearly, $\{Q'_n\}$ is a stationary process as well, and $\{Q_n\}$ converges to the marginal distribution of $\{Q'_n\}$. Recall that $\{Q'_n\}$ is composed of runs of consecutive letters of $[K]$, and that the length of each run is independent of all past runs. The run-length random variables do depend on the letter $k\in[K]$ of the run, and we denote by $T_k\sim P_{T_k}$ a generic random variable corresponding to a run of the letter $k$. Furthermore, we denote by $A_k(t)$ the event that a new run in $\{Q'_n\}$ of letters $k$ started at time $t$, and let the integer random variable $Z_t\geq 1$ denote the number of symbols left in the current run at time $t$ (including the one at time $t$). 
If $Q'_0=k$, this means that a run of letters $k$ started at some time $-t$, and its corresponding $Z_{-t}$ was greater than $t$. We can therefore write 
\begin{align}
    \pi_k &= \Pr\left(Q'_0=k\right)\\
    &=\sum_{t=0}^\infty \Pr\left(A_k(-t),Z_{-t} > t\right)\\
    &=\sum_{t=0}^\infty \Pr\left(A_k(-t)\right)\Pr\left(T_k> t\right)\label{eq:stat1}\\
    &=\Pr\left(A_k(0)\right) \sum_{t=0}^\infty \Pr\left(T_k> t\right)\label{eq:stat2}\\
    &=\Pr\left(A_k(0)\right) \E\left(T_k\right),\label{eq:expect}
\end{align}
where~\eqref{eq:stat1} follows since given that $A_k(t)$ occurred, $Z_t$ is independent of everything that happened before this run began and has the distribution $P_{T_k}$,~\eqref{eq:stat2} is from stationarity, and~\eqref{eq:expect} is due to the identity $\sum_{t=0}^\infty \Pr\left(T_k> t\right)=\E\left(T_k\right)$ for a non-negative random variable. Thus, from stationarity, for each $t$ we have
\begin{align}
  \Pr\left(A_k(t)\right)=\Pr\left(A_k(0)\right)=\frac{\pi_k}{\E\left(T_k\right)}.\label{eq:ent_prob}  
\end{align}
Now, denote by $B_k(t)$ the event that a run of letters $k$ ended at time $t$. Note that since $\{Q'_n\}$ is stationary, Lemma~\ref{lem:flow} suggests that the probability it enters a state $k$ is equal to the probability it leaves a state $k$ at any given time, namely 
\begin{align}
\Pr(B_k(t))=\Pr(A_k(t))=\frac{\pi_k}{\E\left(T_k\right)},~~~k\in[k].
\end{align}
Now consider the sampled Markov chain $\{S_n\}$, and denote its stationary distribution for state $j$ by $\mu_j$, and its transition probability from state $j$ to state $k$ by $P_{jk}$.
We have 
\begin{align}
\Pr\left(A_k(t+1)\right)& =\sum_{j\neq k}\Pr\left(B_j(t)\right)P_{jk}.\label{eq:stat_eq}
\end{align}
Substituting~\eqref{eq:ent_prob} into~\eqref{eq:stat_eq}, we have
\begin{align}
  \frac{\pi_k}{\E\left(T_k\right)}=\sum_{j\neq k}\frac{\pi_j}{\E\left(T_j\right)} P_{jk} .\label{eq:stat_wait}
\end{align}
Thus, the stationary distribution $\{\pi_k\}_{k\in[K]}$ of $\{Q_n\}$ must satisfy~\eqref{eq:stat_wait}. Since $\{\mu_k\}_{k\in[K]}$ is the unique stationary distribution of $\{S_n\}$, we have that
\begin{align}
      \pi _j^* =\frac{\E[T_j] \mu_j }{\sum_{k=1}^M\E[T_k] \mu _k}
      ,~~j\in[K],
\end{align}
is the unique distribution satisfying~\eqref{eq:stat_wait}, and is consequently the stationary distribution of $\{Q_n\}$, as claimed.
\end{proof}

\begin{proof}\textbf{of Lemma~\ref{lem:trans_prob}} :
By construction, $\{S_n\}$ follows the transition probability law plotted in Figure~\ref{fig:sample_chain}. 
For all $i\in\{2,\ldots,K-2\}$, we have from Lemma~\ref{lem:flow} that by choosing the partition $\mathcal{C}=\{1,\ldots,i-1\},\mathcal{C}'=\{i,\ldots, K\}$ and noting from Figure~\ref{fig:sample_chain} that only adjacent states are connected, $\mu _{i-1}p_{i-1}=\mu _i q_i$, or equivalently
\begin{align}
    \mu _{i-1}&= \frac{q_i}{p_{i-1}}\mu _i\label{eq:stat}.
\end{align}
By construction of the mini-chains $\m{S}_i$ and by Lemma~\ref{lem:N_eps}, we have that $q_i< \epsilon$ and $p_i> 1-\epsilon$ for $i<k$. Thus, repeated application of~\eqref{eq:stat} yields 
\begin{align}
    \mu _{k-i} = \prod_{j=1}^i\frac{q_{k-j+1}}{p_{k-j}}\mu _k \leq  \left(\frac{\epsilon}{1-\epsilon}\right)^{i-1} \mu _{k-1},
\end{align}
for $2\leq i \leq k-1$. Similarly, since $p_i<\epsilon$ and $q_i>1-\epsilon$ for $i>k$, we have
\begin{align}\label{eq:stat_dist}
    \mu _{k+i} = \prod_{j=1}^i\frac{p_{k+j-1}}{q_{k+j}}\mu _k \leq  \left(\frac{\epsilon}{1-\epsilon}\right)^{i-1} \mu _{k+1},
\end{align}
for $1\leq i \leq K-1-k$. For the extreme states $1$ and $K$, by appealing to Lemma~\ref{lem:flow} and recalling that $p_1=1$ and $q_K=1$, we have
\begin{align}
    \mu _1=q_2\mu _2  \leq \epsilon \mu _2 < \frac{\epsilon}{1-\epsilon}\cdot \mu _2,
\end{align}
and
\begin{align}
    \mu _K=p_{K-1} \mu _{K-1} \leq \epsilon \mu _{K-1} < \frac{\epsilon}{1-\epsilon}\cdot \mu _{K-1}.
\end{align}
\end{proof}

\begin{proof}\textbf{of Lemma~\ref{lem:wait_time}} :
Fix $\theta$, and recall that each state $i$ in the sampled chain corresponds to a $\isit\left(N_i,\frac{i}{K+2},\frac{i+1}{K+2}\right)$ mini-chain in the original chain, where $N_i=N(\eps,\frac{i}{K+2},K+2)$ is as defined in~\eqref{eq:NepK}. Restricting our attention to that mini-chain, denote by $s_i=f\left(N_i,\frac{i}{K+2},\frac{i+1}{K+2}\right)$ its initial state, and denote by $T_i^1$ the first time a run of $N_i-s_i$ consecutive ones is observed, and $T_i^0$ as the first time a run of $s_i-1$ consecutive zeros is observed. We exit the mini-chain when either a run of $N_i-s_i$ consecutive ones or a run of $s_i-1$ consecutive zeros is observed, so we have that the exit time $T_i$ satisfies $T_i \leq T_i^1$ and $T_i \leq T_i^0$, which implies
\begin{align}
 \E [T_i] \leq \E [T_i^1],\label{eq:Ti1UB}\\
 \E [T_i] \leq \E [T_i^0].\label{eq:Ti0UB}
\end{align}
Next, we observe that $i\mapsto s_i$ is monotonically non-increasing and $i\mapsto N_i-s_i$ is monotonically non-decreasing. These facts can be verified from the formulas~\eqref{eq:NepK} and~\eqref{eq:start_state} for $N(\eps,\frac{i}{K+2},K+2)$ and  $f\left(N_i,\frac{i}{K+2},\frac{i+1}{K+2}\right)$, respectively. 
Thus the expected time to observe a run of $N_i-s_i$ consecutive ones is also non-decreasing and we have 
\begin{align}
    \E \left[T_1^1\right]\leq \E \left[T_2^1\right]\leq\ldots\leq \E \left[T_j^1\right],\label{eq:onebound}
\end{align}
and similarly
\begin{align}
    \E \left[T_S^0\right]\leq\E \left[T_{S-1}^0\right]\leq\ldots\leq\E \left[T_j^0\right].\label{eq:zerobound}
\end{align}
Let $\{W_n^{j}(\theta)\}$ be a random walk in $\isit\left(N_j,\frac{j}{K},\frac{j+1}{K}\right)$ under $\theta$, and let $W_n^{j}(\theta) \rightarrow 1$ (resp. $W_n^{j}(\theta) \rightarrow 0$)  denote the event that $\{W_n^{j}(\theta)\}$ exits right (resp. exits left). We have
\begin{align}
   T_j^1=T_j+\left(T_j^1-T_j\right)\ind (W_n^{j}(\theta) \rightarrow 0).
\end{align} 
By taking the expectation of both sides, we have
\begin{align}
    \E \left[T_j^1\right]&= \E \left[T_j\right]+\E \left[\left(T_j^1-T_j\right)\ind (W_n^{j}(\theta) \rightarrow 0)\right]\\&=\E \left[T_j\right]+\Pr (W_n^{j}(\theta) \rightarrow 0)\E \left[T_j^1-T_j| W_n^{j}(\theta) \rightarrow 0\right]\\&=\E \left[T_j\right]+\Pr (W_n^{j}(\theta) \rightarrow 0)\E \left[T_j^1\right],
\end{align}
due to
\begin{align}
    \E \left[T_j^1-T_j| W_n^{j}(\theta) \rightarrow 0\right]&=\sum _{t=1}^\infty \Pr (T _j = t | W_n^{j}(\theta)\rightarrow 0) \E \left[T_j^1-T_j|T _j = t, W_n^{j}(\theta) \rightarrow 0\right] \\&=\sum _{t=1}^\infty \Pr (T _j = t | W_n^{j}(\theta) \rightarrow 0) \E \left[T_j^1-t|T_j^1 > t, W_t^{j}(\theta) = 1\right]\label{eq:run_after_run}\\& =\sum _{t=1}^\infty\Pr (T _j = t | W_n^{j}(\theta) \rightarrow 0) \E \left[T_j^1\right]\label{eq:memoryless}\\& =\E \left[T_j^1\right],
\end{align}
where~\eqref{eq:run_after_run} is since no run of $N_j-s_j$ ones was observed until time $t$ and the last bit was $X_t=0$, and~\eqref{eq:memoryless} follows from the memoryless property of the chain. Thus,
\begin{align}
    \E \left[T_j\right]&=\Pr (W_n^{j}(\theta) \rightarrow 1)\E \left[T_j^1\right],\\
    \E \left[T_j\right]&=\Pr (W_n^{j}(\theta)\rightarrow 0)\E \left[T_j^0\right].
\end{align}
Equation~\eqref{eq:uppbound1} now follows by recalling that $\Pr (W_n^{j}(\theta)\rightarrow 0)\geq 1-\epsilon$  for $\theta<\frac{j}{K}$ and by appealing to~\eqref{eq:zerobound} and~\eqref{eq:Ti0UB}.~\eqref{eq:uppbound2} is proven similarly by appealing to~\eqref{eq:onebound}.
\end{proof}

\acks{This work was supported by the ISF under Grants 1791/17 and 1495/18.}

\bibliography{estimation_colt}

\appendix

\section{Bound on the error probability}
To establish Lemma~\ref{lem:N_eps}, we need to prove two supporting lemmas. First, we show that $p^1_0$ is achieved by $\theta=q$, and $p^0_1$ is achieved by $\theta=p$. Denote the probability of deciding $\mathcal{H}_0$ under $\theta$ as $p_0(\theta)$, and the probability of deciding $\mathcal{H}_1$ under $\theta$ as $p_1(\theta)$. 
\begin{lemma}\label{lem:monotone}
For $\isit(N,p,q)$, if $\theta > p$, then $p_1(\theta)\leq p_1(p)$. Similarly, if $\theta<q$, then $p_0(\theta)\leq p_0(q)$.
\end{lemma}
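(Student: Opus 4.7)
The plan is to prove both inequalities via a standard monotone coupling argument, leveraging the fact that the decision of $\isit(N,p,q)$ is a coordinate-wise monotone function of the input sequence. By construction, the machine decides $\mathcal{H}_0$ if and only if the input $x_1,x_2,\ldots$ contains a run of $N-s$ consecutive ones strictly before any run of $s-1$ consecutive zeros. Writing $T_1(x)$ and $T_0(x)$ for the first completion times of these two patterns, the decision is $\mathcal{H}_0$ iff $T_1(x)<T_0(x)$ and $\mathcal{H}_1$ otherwise. In particular, neither $T_1$ nor $T_0$ depends on $\theta$; only the distribution of the input does.

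The key step is to show that if $x\leq y$ pointwise, then $T_1(y)\leq T_1(x)$ and $T_0(y)\geq T_0(x)$. Wherever $x_i=1$ we must have $y_i=1$, so any run of $N-s$ ones in $x$ is also a run of $N-s$ ones in $y$ at the same location, yielding $T_1(y)\leq T_1(x)$. The dual argument for zeros gives $T_0(y)\geq T_0(x)$. Combining, if $x$ decides $\mathcal{H}_0$ then so does $y$, i.e., the event ``decide $\mathcal{H}_0$'' is an up-set in the coordinate-wise order on $\{0,1\}^{\mathbb{N}}$, and correspondingly ``decide $\mathcal{H}_1$'' is a down-set.

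To transfer this to monotonicity in $\theta$, I would couple the Bernoulli processes simultaneously via i.i.d.\ $\mathrm{Uniform}[0,1]$ variables $U_1,U_2,\ldots$, setting $X_i^{(\theta)}=\ind\{U_i\leq\theta\}$. Then $\theta_1\leq\theta_2$ implies $X^{(\theta_1)}\leq X^{(\theta_2)}$ pointwise, so $p_0(\theta_1)\leq p_0(\theta_2)$ and $p_1(\theta_1)\geq p_1(\theta_2)$ by the up-set/down-set observation. Specializing to $\theta>p$ gives $p_1(\theta)\leq p_1(p)$, and to $\theta<q$ gives $p_0(\theta)\leq p_0(q)$, which is exactly the lemma.

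I do not anticipate a serious obstacle; the only point that warrants care is verifying that the state dynamics depicted in Figure~\ref{fig:BHT_Machine} genuinely realize the ``first-run'' characterization, so that the argument may rest on the input sequence alone. This is apparent from the figure: any $0$ read from a state $\geq s$ sends the machine back to state $s-1$, and any $1$ read from a state $<s$ sends it back to state $s+1$, so neither ``counter'' can make partial progress without being fully reset upon interruption, and the machine exits right (resp.\ left) precisely when a full run of $N-s$ ones (resp.\ $s-1$ zeros) is observed.
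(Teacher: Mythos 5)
Your proof is correct. Both you and the paper ultimately rely on the same monotone coupling of the two input sequences (your $X_i^{(\theta)}=\ind\{U_i\leq\theta\}$ construction is distributionally the same as the paper's coupling, in which the $\theta$-walk copies every rightward step of the $p$-walk and upgrades each leftward step with an independent $\Bern\bigl(\frac{\theta-p}{1-p}\bigr)$ coin), but you exploit it differently. The paper works at the level of the chain trajectories: with absorbing boundary states, it argues that under the coupling $W_n^{\theta}$ is never to the left of $W_n^{p}$, so the $\theta$-walk can exit left only if the $p$-walk does; this requires (implicitly) checking that the reset transitions of $\isit(N,p,q)$ preserve the pointwise ordering of positions. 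You instead abstract away the dynamics: after verifying that the machine exits right iff a run of $N-s$ ones is completed before a run of $s-1$ zeros, you observe that this event is an up-set in the coordinatewise order on $\{0,1\}^{\mathbb N}$ (since $x\leq y$ gives $T_1(y)\leq T_1(x)$ and $T_0(y)\geq T_0(x)$), and monotonicity of $p_0(\theta)$, $p_1(\theta)$ in $\theta$ follows from the standard coupling. Your route buys a cleaner argument that needs no case analysis of the transition structure beyond the one-time run characterization (which you correctly justify), and it yields the slightly stronger statement that $p_0$ is nondecreasing and $p_1$ is nonincreasing on all of $[0,1]$; the paper's pathwise-ordering argument is more tied to this particular chain but avoids reformulating the stopping rule in terms of runs. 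Either way the lemma follows, and the only point of care you flag — that Figure~\ref{fig:BHT_Machine} really implements the first-run rule — is exactly the right one and is handled correctly.
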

\begin{proof}
We prove the first part of the claim, and the second follows symmetrically. To that end, we use a coupling argument. Denote by $\{W_n^p\}$ the random walk on $\isit(N,p,q)$ under $p$ and $\{W_n^{\theta}\}$ as the random walk on $\isit(N,p,q)$ under $\theta$, where here we assume the extreme states $1$ and $N$ are absorbing, such that once the random walk reaches one of these states, it stays there forever. We couple the two processes using the following joint distribution for $\left(\{W_n^p\},\{W_n^{\theta}\}\right)$: Let $\{W_n^p\}$ be the standard walk on the chain under the $\Bern(p)$ sequence. For any $n$, if $W_n^p$ goes one step to the right, $W_n^{\theta}$ goes one step to the right as well. If $W_n^p$ goes one step to the left, we flip an independent $\Bern\left(\frac{\theta-p}{1-p}\right)$ coin, and $W_n^{\theta}$ goes one step to the right upon seeing $1$ or one step to the left upon seeing $0$. It is easy to see that the marginal distribution under $\{W_n^{\theta}\}$ corresponds to the chain under the $\Bern(\theta)$ distribution, and this is therefore a valid coupling. Our claim now immediately follows from the observation that under this coupling, $W_n^{\theta}$ is never to the left of $W_n^p$. 
\end{proof}
Second, we prove the following lemma, which bounds the error probability of $\isit(N,p,q)$ when the hypotheses are $\frac{1}{K}$ apart. 
\begin{lemma}\label{lem:close_hyp}
For any $\frac{2}{K}\leq p\leq 1-\frac{1}{K}$, $q=p-\frac{1}{K}$ and $N\geq 3+\left\lceil K\cdot 6\log\frac{2}{ p_{\min}} \right\rceil$, it holds that
\begin{align}
    \Pe^{\isit(N,p,q)}\leq  \frac{2}{p_{\min}}\cdot \exp_2\left\{-\frac{ \left(1-\frac{1}{K\cdot p}\right)\left(\frac{1}{K}H_b(p)-\frac{1}{K^2}\log p\right)\left(N-3\right)}{\frac{1}{K} \left(1-2\left(p-\frac{1}{K}\right)\right)-2\left(p-\frac{1}{K}\right)\left(1-p+\frac{1}{K}\right)\log p(1-p)}\right\},
\end{align}
where $p_{\min}=\min \{p(1-p),q(1-q)\}$ and $H_b(p)\triangleq -p\log p -(1-p)\log (1-p)$ is the binary entropy of $p$.  
\end{lemma}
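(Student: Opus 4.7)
The plan is to derive the bound in three stages. First, I would invoke Lemma~\ref{lem:monotone} to reduce the worst-case error probability $\Pe^{\isit(N,p,q)} = \max\{p^0_1, p^1_0\}$ to the two boundary values: $p^1_0$ is attained as $\theta \to p$ from above, and $p^0_1$ as $\theta \to q$ from below. The two quantities are related by the $0$-$1$ symmetry of the machine (swap each bit, replace $\theta$ by $1-\theta$, and reflect the chain), so it suffices to bound $p_1(p) = \Pr(\isit(N,p,q)\text{ exits left} \mid \theta = p)$; the bound on $p_0(q)$ follows from an identical calculation with $(p,1-p)$ replaced by $(1-q,q)$.

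Next, I would rewrite $p_1(p)$ as a competing-runs probability: under $\theta = p$, the machine exits left exactly when a run of $s-1$ consecutive zeros appears in an i.i.d.\ Bernoulli($p$) sequence before a run of $N-s$ consecutive ones does. Let $\tau_0$ and $\tau_1$ denote the respective first-passage times. For any threshold $T$,
\begin{align*}
p_1(p) = \Pr(\tau_0 < \tau_1) \leq \Pr(\tau_0 \leq T) + \Pr(\tau_1 > T).
\end{align*}
The first term is controlled by a union bound over the $\leq T$ positions at which a zero-run of length $s-1$ could begin, giving $\Pr(\tau_0 \leq T) \leq T(1-p)^{s-1}$. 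The second term is controlled by partitioning the first $T$ bits into $\lfloor T/(N-s)\rfloor$ disjoint blocks of length $N-s$ and noting that these blocks are independent and each is all-ones with probability $p^{N-s}$, so $\Pr(\tau_1 > T) \leq (1-p^{N-s})^{\lfloor T/(N-s)\rfloor} \leq \exp\{-\lfloor T/(N-s)\rfloor\,p^{N-s}\}$.

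Finally, I would balance by choosing $T$ of order $(1-p)^{-(s-1)}$, which keeps the union-bound term at a constant multiple of the target while forcing the block-independence term to decay like $\exp\{-\Omega(p^{N-s}(1-p)^{-(s-1)}/(N-s))\}$. Taking base-$2$ logarithms, the effective rate is a linear combination of $\log p$ and $\log(1-p)$ with coefficients $N-s$ and $s-1$; the symmetric analysis under $\theta=q$ produces an analogous rate involving $\log q$ and $\log(1-q)$. The choice $s = f(N,p,q) = 2+\lceil\alpha(N-3)\rfloor$ with $\alpha = \log(pq)/[\log p(1-p) + \log q(1-q)]$ is precisely the value of $s$ that equalises these two rates — one can verify directly that solving $a|\log(1-p)| - b|\log p| = b|\log q| - a|\log(1-q)|$ subject to $a+b = N-1$ yields $a/(a+b) = \alpha$. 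Substituting $q = p - 1/K$ and performing a first-order expansion in $1/K$ of the common rate, with the factor $2/p_{\min}$ absorbing the block-length $N-s$ and the threshold constant $T$ from both error terms, produces the specific combination of $H_b(p)$, $\log p$, and $\log p(1-p)$ appearing in the stated exponent.

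The main obstacle I expect is the final algebraic simplification: the ceiling in $f(N,p,q)$ introduces $O(1)$ boundary corrections to $s-1$ and $N-s$, and the $1/K$-expansions must be carried out on both the numerator and denominator of the rate simultaneously in order for the ratio to match the precise functional form claimed. This is pure bookkeeping rather than a new idea, but the fact that the stated exponent mixes $H_b(p) = -p\log p - (1-p)\log(1-p)$ with $\log p(1-p) = \log p + \log(1-p)$ — two genuinely different linear combinations of $\log p$ and $\log(1-p)$ — suggests that the balancing between the $\theta = p$ and $\theta = q$ sub-problems must be done jointly and that some cancellations are delicate.
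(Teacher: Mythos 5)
Your reduction via Lemma~\ref{lem:monotone} and your identification of $s^*$ as the rate-equalizing initial state match the paper, but the middle step --- bounding the competing-runs probability by $\Pr(\tau_0<\tau_1)\le\Pr(\tau_0\le T)+\Pr(\tau_1>T)$ with a union bound and disjoint blocks --- cannot yield the lemma as stated, and this is a genuine gap rather than bookkeeping. The target bound is of order $(1-p)^{s-2}/p^{\,N-s-1}$, which is comparable to $\E[\tau_1]\cdot(1-p)^{s-1}$ (since $\E[\tau_1]\approx 1/((1-p)p^{\,N-s})$). To make your second term that small you must take $T\gtrsim \E[\tau_1]\cdot\ln(1/\mathrm{target})$, and with the disjoint-block estimate $\Pr(\tau_1>T)\le(1-p^{N-s})^{\lfloor T/(N-s)\rfloor}$ you need an additional factor $N-s$ on top of that; but then the union-bound term $T(1-p)^{s-1}$ exceeds the target by a factor of order $(N-s)\ln(1/\mathrm{target})$. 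Your claim that the prefactor $2/p_{\min}$ absorbs ``the block-length $N-s$ and the threshold constant $T$'' is false whenever $p$ is bounded away from $0$ and $1$: there $2/p_{\min}=O(1)$ while the loss is $\Omega\bigl((N-s)\log\frac{1}{\eps}\bigr)=\Omega(K)$. This is not harmless downstream: the sharp, $N$-independent prefactor is exactly what lets the paper take $N_k=O(K\log\frac{1}{\eps p_{\min}})$ and $S=O(K^2)$; absorbing a $\mathrm{poly}(K)$ loss into the exponent would force $N_k=\Omega(K\log K)$ for essentially every mini-chain, hence $S=\Omega(K^2\log K)$ and a final risk of order $\log S/S$ --- precisely the Leighton--Rivest rate the theorem is meant to beat.

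The paper sidesteps this entirely by not using a first-moment/threshold argument at all: it quotes the \emph{exact} exit probabilities of the $\isit(N,p,q)$ machine from \cite{berg2020binary}, namely $p_1(p)=\frac{1-p^{N-s}}{1+\frac{p^{N-s-1}}{(1-p)^{s-2}}-p^{N-s-1}}\le\frac{(1-p)^{s-2}}{p^{N-s-1}}\cdot\frac{1}{1-(1-p)^{s-2}}$ and its mirror image for $p_0(q)$, balances them at $s^*$, pays only a factor $1/p_{\min}$ for rounding $s^*$ to an integer and a factor $2$ for the remaining denominators, and then lower-bounds $r(p,p-\frac1K)$ using the two-sided inequality $\frac{x}{1+x}\le\log(1+x)\le x$ (not a heuristic first-order expansion, which by itself would not give a one-sided bound). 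If you want to avoid citing the exact formulas, you would need an argument that tracks the gambler's-ruin structure of the runs machine (e.g.\ solving the linear recursion for the absorption probabilities, or an optional-stopping/martingale argument), not a union bound against a fixed time horizon.
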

\begin{proof} 
In~\cite{berg2020binary}, the authors showed that for initial state $s$, we have 
\begin{align}
p_1(p)&=\frac{1-p^{N-s}}{1+\frac{p^{N-s-1}}{(1-p)^{s-2}}-p^{N-s-1}}\\
&\leq \frac{(1-p)^{s-2}}{p^{N-s-1}}\cdot \frac{1}{1-(1-p)^{s-2}},
\end{align}
and
\begin{align}
  p_0(q)&= \frac{1-(1-q)^{s-1}}{1+\frac{(1-q)^{s-2}}{q^{N-s-1}}-(1-q)^{s-2}}\\
  &\leq \frac{q^{N-s-1}}{(1-q)^{s-2}}\cdot \frac{1}{1-q^{N-s-1}}.
\end{align}
Choosing $s=s^*$, where $s^*$ is  
\begin{align}
    2+\frac{\log pq}{\log p(1-p)+\log q(1-q)}(N-3),
    \label{eq:optimal_s}
\end{align}
we get
\begin{align}
   \frac{(1-p)^{s^*-2}}{p^{N-s^*-1}} =\frac{q^{N-s^*-1}}{(1-q)^{s^*-2}}
   =2^{-r(p,q)(N-3)},\label{eq:BHT_exp}
\end{align}
where
\begin{align}
    r(p,q)\triangleq \frac{\log p \log (1-q)-\log q \log (1-p)}{\log p(1-p)+\log q(1-q)}.
\end{align}
We therefore have, for $s=s^*$,
\begin{align}
    \max \left\{ p_0(q),p_1(p)\right\}\leq 2^{-r(p,q)(N-3)}\cdot \max \left\{ \frac{1}{1-(1-p)^{s^*-2}},\frac{1}{1-q^{N-s^*-1}}\right\}\label{eq:error}.
\end{align}
Recall that $s$ is a state in the chain so it must be an integer, whereas $s^*$ may not be. Thus, we need to round $s^*$ either up or down, in which case, both ratios in~\eqref{eq:BHT_exp} $\frac{(1-p)^{s^*-2}}{p^{N-s^*-1}}$, and $\frac{q^{N-s^*-1}}{(1-q)^{s^*-2}}$, will increase by at most $\frac{1}{p_{\min}}$, where $p_{\min}=\min \{p(1-p),q(1-q)\}$. Furthermore, for our choice of $N$, $\frac{2}{K}\leq p\leq 1-\frac{1}{K}$ and $q=p-\frac{1}{K}$, we have that $3<s*< N-2$ and the rightmost part of~\eqref{eq:error} is always upper bounded by $2$. Combining this with Lemma~\ref{lem:monotone}, we therefore get the bound
\begin{align}
   \Pe^{\isit(N,p,q)}=\max\left\{p^0_1,p^1_0\right\}=\max \left\{ p_0(q),p_1(p)\right\} \leq  \frac{2}{p_{\min}} \cdot 2^{-r(p,q)(N-3)}.\label{eq:pe_run}
\end{align}
Setting $p-q=\delta>0$, we have
\begin{align}
    r\left(p,p-\delta\right) &=\frac{\log p \log (1-p+\delta)-\log (p-\delta) \log (1-p)}{\log p(1-p)+\log (p-\delta)(1-p+\delta)}\\&=\frac{\log p \left(\log (1-p)+\log \left(1+\frac{\delta}{1-p}\right)\right)-\left(\log p+\log \left(1-\frac{\delta}{p}\right) \right)\log (1-p)}{\log p(1-p)+\log (1-p)+\log \left(1+\frac{\delta}{1-p}\right)+\log p+\log \left(1-\frac{\delta}{p}\right) }
     \\&\geq
    \frac{\frac{\delta}{1-p+\delta}\log p+\frac{\delta}{p}\log (1-p)}{2\log p(1-p)+\frac{\epsilon}{1-p+\delta}-\frac{\delta}{p-\delta}}\label{eq:log_ineq}
    \\&=
     -\frac{p-\delta}{p}\cdot \frac{\delta p\log p +\delta(1-p+\delta)\log (1-p)}{2(p-\delta)(1-p+\delta)\log p(1-p)-\delta (1-2(p-\delta))}
    \\& =\left(1-\frac{\delta}{p}\right) \cdot \frac{\delta H_b(p)-\delta^2\log (1-p)}{\delta (1-2(p-\delta))-2(p-\delta)(1-p+\delta)\log p(1-p)}\label{eq:bin_ent},
\end{align}
where~\eqref{eq:log_ineq} follows from $\frac{x}{x+1}\leq \log (1+x) \leq x$ and~\eqref{eq:bin_ent} follows from the definition of the binary entropy. The claim follows by substituting $\delta = \frac{1}{K}$.
\end{proof}
\begin{proof}\textbf{of Lemma~
\ref{lem:N_eps}} :
Let $N=3+\lceil c\cdot K\rceil $, for some $c\geq 6\log\frac{2}{ p_{\min}}$. From Lemma~\ref{lem:close_hyp},
\begin{align}
    \Pe^{\isit\left(N,p,p-\frac{1}{K}\right)} \leq  \frac{2}{p_{\min}}\cdot \exp_2\left\{-\frac{ c \left(1-\frac{1}{K\cdot p}\right)\left(H_b(p)-\frac{1}{K}\log (1-p)\right)}{\frac{1}{K} \left(1-2\left(p-\frac{1}{K}\right)\right)-2\left(p-\frac{1}{K}\right)\left(1-p+\frac{1}{K}\right)\log p(1-p)}\right\}
\end{align}
In order to guarantee $\Pe^{\isit\left(N,p,p-\frac{1}{K}\right)} \leq \epsilon$, it is sufficient to choose $c$ to be
\begin{align}
    \frac{\frac{1}{K} \left(1-2\left(p-\frac{1}{K}\right)\right)-2\left(p-\frac{1}{K}\right)\left(1-p+\frac{1}{K}\right)\log p(1-p)}{\left(1-\frac{1}{K\cdot p}\right)\left(H_b(p)-\frac{1}{K}\log (1-p)\right)}\cdot \log \frac{2}{\epsilon p_{\min}}.\label{eq:c_exp}
\end{align}
Upper bounding the first term in the brackets, we get
\begin{align}
&\frac{\frac{1}{K} \left(1-2\left(p-\frac{1}{K}\right)\right)-2\left(p-\frac{1}{K}\right)\left(1-p+\frac{1}{K}\right)\log p(1-p)}{\left(1-\frac{1}{K\cdot p}\right)\left(H_b(p)-\frac{1}{K}\log (1-p)\right)}\\
&\leq \frac{1}{1-\frac{1}{K\cdot p}}\cdot \frac{\frac{1}{K}+2\left(H_b(p)-\frac{1}{K}\log (1-p)\right)}{H_b(p)-\frac{1}{K}\log (1-p)}\label{eq:bin_ent_bound}\\&= \frac{1}{1-\frac{1}{K\cdot p}}
\left(2+\frac{1}{K\cdot H_b(p)-\log(1-p)}\right)
\\&\leq  \frac{3}{1-\frac{1}{K\cdot p}}\label{eq:weird_bound} \\&\leq 6\label{eq:p_bound},
\end{align}
where~\eqref{eq:bin_ent_bound},~\eqref{eq:weird_bound} and~\eqref{eq:p_bound} follows since $p\geq \frac{2}{K}$ implies 
\begin{enumerate}[(i)]
    \item $H_b(p)-\frac{1}{K}\log (1-p)\geq -\left(p-\frac{1}{K}\right)\left(1-p+\frac{1}{K}\right)\log p(1-p)$,
    \item $K\cdot H_b(p)-\log(1-p)\geq 1$,
    \item $\frac{1}{1-\frac{1}{K\cdot p}}\leq 2$.
\end{enumerate}
Combining~\eqref{eq:p_bound} and~\eqref{eq:c_exp}, noting that $\min \left\{p(1-p),\left(p-\frac{1}{K}\right)\left(1-p+\frac{1}{K}\right)\right\}\geq \left(p-\frac{1}{K}\right)(1-p)$, and choosing
\begin{align}
    c=c_{\epsilon,p} = 6\log \frac{2}{\epsilon \left(p-\frac{1}{K}\right)(1-p)},
\end{align}
the proof is concluded.
\end{proof} 

\section{Calculation of number of states $S$ in~\eqref{eq:Ssum}}
Using the expression in~\eqref{eq:NepK} for $N(\eps,p,K)$ we obtain
\begin{align}
S&=\sum_{k=1}^K N_k\\&=\sum_{k=1}^K N\left(\eps,\frac{k+1}{K+2},K+2\right)\\&\leq 4K+6(K+2)\sum_{k=1}^K \log  \frac{2}{\eps\left(\frac{k}{K+2}\cdot\frac{K-k+1}{K+2}\right)}  \\
&= 4K+6K(K+2)\log \left(\frac{2}{\epsilon}\right) -6(K+2)\cdot 2 \sum_{k=1}^\frac{K}{2}\log \left(\frac{k}{K+2}\cdot\frac{K-k+1}{K+2}\right)\label{eq:symmetry}\\& \leq 4K+6K(K+2)\log \left(\frac{2}{\epsilon}\right)-6(K+2)\cdot 2 \sum_{k=1}^\frac{K}{2}\log \left(\frac{k}{K+2}\right)-6K(K+2)\label{eq:log_bound1}\\& \leq 4K+6K(K+2)\log \left(\frac{2}{\epsilon}\right)-6K(K+2) \log \left(\frac{K}{2e(K+2)}\right)-6K(K+2)\label{eq:factorial_bound}\\ & \leq 4K+6K(K+2)\log \left(\frac{2e}{\epsilon}\right)+12(K+2)\leq 6(K+2)^2\log \left(\frac{2e}{\epsilon}\right)\label{eq:log_bound2b},
\end{align}
where~\eqref{eq:symmetry} follows from the symmetry of $\left(\frac{k}{K+2}\cdot\frac{K-k+1}{K+2}\right)$ around $k=\frac{K}{2}$,~\eqref{eq:log_bound1} from $\frac{K-k+1}{K+2}\geq \frac{1}{2}$ for all $1\leq k \leq \frac{K}{2}$,~\eqref{eq:factorial_bound} is from $n!\geq \left(\frac{n}{e}\right)^n$ and~\eqref{eq:log_bound2b} follows from $\log (1+x) \geq \frac{x}{x+1}$. 

\section{Proof of $R_{\theta}=O(1/S)$ for $\theta\in \left[0,\frac{1}{K+2}\right)$ and $\theta\in \left(\frac{K+1}{K+2},1\right]$}
We shall prove the case $\theta \leq \frac{1}{K+2}$. The case of $\theta \geq 1- \frac{1}{K+2}$ follows from a symmetric argument.
We show how previous results imply that for very small $\theta$ the stationary distribution is concentrated on the two leftmost states of the sampled chain. From there, the proof is similar (yet not identical) to the proof of the general case. Let us go step by step:

\begin{itemize}
    \item Firstly, Lemma~\ref{lem:N_eps} implies that $p_k < \epsilon$ for all $k>1$ in the chain of Figure~\ref{fig:sample_chain}.
    \item Now, a simplified (one-sided) version of Lemma~\ref{lem:trans_prob} shows the stationary distribution is exponentially decreasing for all states $\geq 2$. This follows since eq.~\eqref{eq:stat_dist} still holds with $k=1$,
    \begin{align}
    \mu _{i+1} \leq  \left(\frac{\epsilon}{1-\epsilon}\right)^{i-1} \mu _{2},
    \end{align}
    for $1\leq i \leq K-1$.
    \item Applying Lemma~\ref{lem:wait_time}, eq.~\eqref{eq:uppbound1} states that $\E[ T_ j ] > (1-\epsilon) \E[ T_ i]$ for all $j \in [n]$ and $i>j$.
    \item Calculate the risk $R_{\theta}$.
\end{itemize}
\begin{align}
   R_\theta &=\sum_{i=1}^K\frac{\E[T_i] \mu_i }{\sum_{j=1}^K\E[T_j] \mu _j}\left(\frac{i}{K+2}-\theta\right)^2
  \\&=\frac{\E[T_1] \mu_1 }{\sum_{j=1}^K\E[T_j] \mu _j}\left(\frac{1}{K+2}-\theta\right)^2+\sum_{i=2}^K\frac{\E[T_i] \mu_i }{\sum_{j=1}^K\E[T_j] \mu _j}\left(\frac{i}{K+2}-\theta\right)^2
  \\ &\leq \frac{1}{(K+2)^2}+\frac{1}{1-\epsilon}\sum_{i=2}^K\frac{\E[T_{2}] \mu_{2} }{\sum_{j=1}^K\E[T_j] \mu _j}\frac{\mu_i}{\mu_{2}}\left(\frac{i}{K+2}-\theta\right)^2\label{eq:exp_bound2}\\&\leq  
 \frac{1}{(K+2)^2}+ \frac{1}{1-\epsilon}\sum_{i=1}^{K-1}\left(\frac{\epsilon}{1-\epsilon}\right)^{i-1} \left(\frac{i+1}{K+2}\right)^2\label{eq:stationary_bound2}\\&\leq \frac{1}{(K+2)^2}\cdot \frac{1}{1-\epsilon}\left(\sum_{i=1}^{\infty} \left(\frac{\epsilon}{1-\epsilon}\right)^{i-1}(i+1)^2+1\right)\label{eq:sum_bound2}\\&\leq \frac{6\log \left(\frac{2e}{\epsilon}\right)}{S}\left(\frac{\epsilon}{(1-2\epsilon)^3}+\frac{4(1-\epsilon)}{(1-2\epsilon)^2}+\frac{1}{1-\epsilon}\right),\label{eq:sum_identity2}
\end{align}
where~\eqref{eq:exp_bound2} follows from Lemma~\ref{lem:wait_time}, ~\eqref{eq:stationary_bound2} follows from Lemma~\ref{lem:trans_prob} and since $\frac{\E[T_j] \mu_j }{\sum_{k=1}^M\E[T_k] \mu _k}\leq 1$,~\eqref{eq:sum_bound2} is since we only add positive terms, and~\eqref{eq:sum_identity2} is due to the identity $\sum_{i=0}^\infty q^i (i+2)^2=\frac{q(1+q)+4(1-q)}{(1-q)^3}$ and by substituting~\eqref{eq:Ssum}. Finally, substituting $\eps=1/100$ into~\eqref{eq:sum_identity2} gives $R_{\theta}<\frac{300}{S}$.

\end{document}